\newtheorem{lemma}{Lemma}
\newtheorem{corollary}{Corollary}
\newtheorem{theorem}{Theorem}
\newtheorem{definition}{Definition}
\newenvironment{proof}{{\noindent\it Proof:} }{\hfill $\square$\par}
\newtheorem{example}{Example}
\newtheorem{remark}{Remark}
\begin{document}
	
	\title{Some quaternary additive codes outperform linear counterparts}
	
	\author{Chaofeng Guan, Ruihu Li, Yiting Liu, Zhi Ma~\IEEEmembership{}
	}
	\markboth{}%
	{}
	
	\IEEEpubid{}
	
	\maketitle
	
	\begin{abstract}
		The additive codes may have better parameters than linear codes. However, it is still a challenging problem to efficiently construct additive codes that outperform linear codes, especially those with greater distances than linear codes of the same lengths and dimensions.
		This paper focuses on constructing additive codes that outperform linear codes based on quasi-cyclic codes and combinatorial methods.
		Firstly, we propose a lower bound on the symplectic distance of 1-generator quasi-cyclic codes of index even.  
		Secondly, we get many binary quasi-cyclic codes with large symplectic distances utilizing computer-supported combination and search methods, all of which correspond to good quaternary additive codes. Notably, some additive codes have greater distances than best-known quaternary linear codes in Grassl's code table (bounds on the minimum distance of quaternary linear codes http://www.codetables.de) for the same lengths and dimensions. Moreover, employing a combinatorial approach, we partially determine the parameters of optimal quaternary additive 3.5-dimensional codes with lengths from $28$ to $254$.
		Finally, as an extension, we also construct some good additive complementary dual codes with larger distances than the best-known quaternary linear complementary dual codes in the literature.
	\end{abstract}
	
	\begin{IEEEkeywords}
		quasi-cyclic codes, symplectic bound, additive codes, optimal, additive complementary dual codes.
	\end{IEEEkeywords}

	\section{Introduction}
	\IEEEPARstart{O}{ne} of the most significant problems in coding theory is constructing good error-correcting codes. 
	After decades of efforts, scholars have constructed a large number of linear codes with suitable parameters, Grassl et al. summarized those results and established an online code table \cite{Grassltable} of best-known linear codes over small finite fields $\mathbb{F}_q$, $q\le 9$.
	Unlike linear codes, additive codes are closed under vector addition but not necessarily closed under scalar multiplication. All linear codes can be considered as also additive codes, but additive codes are not necessarily linear. Therefore, theoretically, additive codes may have better parameters than linear codes. In addition, additive codes also have critical applications in quantum information \cite{calderbank1998quantum,ketkar2006nonbinary}, computer memory systems \cite{chen1984error,chen1991fault,chen1992symbol}, deep space communication \cite{hattori1998subspace}, and secret sharing \cite{kim2017secret}. Thus, it is crucial to construct good additive codes, especially ones with better performance than best linear codes.

	Quaternary additive codes were the first to receive scholarly attention, given the links to communications, electronic devices, computers, etc.
	In \cite{blokhuis2004small}, Blokhuis and Brouwer determined the parameters of optimal quaternary additive codes of lengths not more than 12, some of which have higher information rates than optimal linear cases.
	Afterward, much work has been done on quaternary additive codes with small lengths \cite{bierbrauer2009short,bierbrauer2010geometric,bierbrauer2015nonexistence,bierbrauer2019additive} or low dimensions \cite{guo2017construction,bierbrauer2021optimal}, resulting in a general determination of the parameters of quaternary additive codes of lengths up to 15 and a complete determination of the parameters of 2.5-dimensional optimal quaternary additive codes. Meanwhile, additive complementary dual codes\footnote{In this paper, all additive complementary dual codes are with respect to the trace Hermitian inner product, which can be simplified to Hermitian inner product, in linear case \cite{calderbank1998quantum}, and all linear complementary dual codes (LCD codes) are with respect to the Hermitian inner product.} 
	(ACD codes) have also attracted a wild attention of scholars owing to their utility in constructing maximal-entanglement entanglement-assisted quantum codes \cite{lai2017linear,xu2021constructions,huang2022constructions}, and their application in resisting side-channel attacks \cite{carlet2016complementary,shi2023additive,benbelkacem2020z2z4,shi2022additive,dougherty2022additive}.
	
	Quasi-cyclic codes are an interesting class of linear codes that exhibit good performance in constructing record-breaking or best-known linear codes in \cite{Grassltable}. 
	With appropriate mappings, quasi-cyclic code can be used to construct good additive codes \cite{galindo2018quasi,guneri2018additive,shi2018asymptotically,shi2021Z2Z4,guan2022symplectic,Guan2023OnEH}.
	However, there is still a lack of practical approaches to construct additive codes using quasi-cyclic codes, which makes it challenging to construct good additive codes with quasi-cyclic codes.
	This paper proposes a lower bound on the symplectic distance of 1-generator quasi-cyclic codes of index even and several combinatorial construction methods of additive codes. 
	Further, we construct many good quaternary additive codes and ACD codes, which perform better than linear counterparts in Grassl's code table \cite{Grassltable} or best-known LCD codes in \cite{lu2020optimal,harada2021construction,Ishizuka2022,ishizuka2022construction}.

	This paper is structured as follows.
	In the next section, we give some of the foundations used in this paper. In Sec. \ref{III}, we propose a lower bound on the symplectic distance of 1-generator quasi-cyclic codes of index even. In Sec. \ref{IV}, 
	we give some combinations, enhancements and derivations of additive codes, and construct a large number of good additive codes.
	In Sec. \ref{V}, we discuss the construction of ACD codes, and obtain some quaternary ACD codes that are better than the quaternary LCD codes in the literature.
	Finally, in Sec. \ref{VI}, we discuss our main results and future research directions.
	The parameters of linear or additive codes in this paper are computed by algebra software Magma \cite{bosma1997magma}.
	\section{Preliminaries}\label{II}
	This section presents some fundamentals of additive codes and quasi-cyclic codes. 
	For more details, refer to  \cite{huffman2010fundamentals,bierbrauer2017introduction,huffman2021concise}. 
	
	\subsection{Additive codes}
	
	Let $p$ be a prime, and $\mathbb{F}_q$ is the finite field of order $q$, where $q = p^r$ for some positive integer
	$r$.
	For $\vec{u }=(u_{0},\ldots, u_{n-1})\in \mathbb{F}_{q}^{n}$, the Hamming weight of $\vec{u}$ is 
	$\mathrm{w}_{H}(\vec{u})=\#\left\{i \mid u_{i} \neq0, 0 \leq i \leq n-1\right\}$. Let $\vec{u }_1,\vec{u }_{2}\in \mathbb{F}_{q}^{n}$, then Euclidean inner product of them is $\langle\vec{u}_1, \vec{u}_{2}\rangle_{e}=\sum_{i=0}^{n-1} u_{1,i} u_{2,i}$.
	For $\vec{v}=(v_{0},\ldots, v_{2n-1}) \in \mathbb{F}_{q}^{2n}$,  
	symplectic weight of $\vec{v}$ is 
	$\mathrm{w}_{s}(\vec{v})=\#\left\{i \mid (v_{i}, v_{n+i}) \neq(0,0), 0 \leq i \leq n-1 \right\}  $.
	Let $\vec{v}_1,\vec{v}_{2}\in \mathbb{F}_{q}^{2n}$, then the symplectic inner product of them is  $\langle\vec{v}_1,\vec{v}_{2}\rangle_{s}=\sum_{i=0}^{n-1}\left(v_{1,i} v_{2,n+i}-v_{1,n+i} v_{2,i}\right)$.

	A code $\mathscr{C}_l$ is said to be linear over $\mathbb{F}_q$ if it is a linear subspace of $\mathbb{F}_q^n$.  
	If $\mathscr{C}_l$ have dimension $k$, minimum Hamming distance (weight) $d_H$, then $\mathscr{C}_l$ can be denoted as $[n,k,d_H]_q$.
	Linear codes of even lengths can also be considered as symplectic codes $\mathscr{C}_s$.
	If $\mathscr{C}_s$ is a $[2n,k]_q$ symplectic code of the minimum symplectic distance $d_s$, then $\mathscr{C}_s$ can be denoted as $[2n,k,d_s]^s_{q}$. 
	A code $\mathscr{C}_a$ is said to be an additive code over $\mathbb{F}_{q^2}$ if it is a subgroup of $\mathbb{F}^n_{q^2}$, this means that scalar multiples of the codewords do not necessarily belong to the code. Defining $k_a$ as the dimension of $\mathscr{C}_a$ over $\mathbb{F}_{q}$,
	then $\mathscr{C}_a$ have dimension $\frac{k_a}{2}$ over $\mathbb{F}_{q^2}$. If $\mathscr{C}_a$ have minimum Hamming distance $d_H$, then $\mathscr{C}_a$ can be denoted as $(n,\frac{k_a}{2},d_H)_{q^2}$.

	The Euclidean dual code of $\mathscr{C}_l$ is $\mathscr{C}_l^{\perp_{e}}=\left\{\vec{c}_1 \in \mathbb{F}_{q}^{n} \mid\langle\vec{c}_1, \vec{c}_{2}\rangle_{e}=0, \forall \vec{c}_{2} \in \mathscr{C}_l\right\}$.
	Symplectic dual of $\mathscr{C}_s$ is
	$ \mathscr{C}_s^{\perp_{s}}= \left\{\vec{c}_1 \in \mathbb{F}_{q}^{2n} \mid\langle\vec{c}_1, \vec{c}_{2}\rangle_{s}=0, \forall \vec{c}_{2} \in \mathscr{C}_s\right\}$.
	$\mathscr{C}_l$ is an Euclidean LCD code if and only if $\mathscr{C}_l \cap \mathscr{C}_l^{\perp_{e}} =\{\mathbf{0}\}$. $\mathscr{C}_s$ is a symplectic LCD code if and only if $\mathscr{C}_s \cap \mathscr{C}_s^{\perp_{s}} =\{\mathbf{0}\}$.

	To establish the connection between $\mathbb{F}^{2n}_q$ and $\mathbb{F}^{n}_{q^2}$. We define two maps, $\phi$ and $\Phi$, as follows. For $x,y\in\mathbb{F}_q$, then
	$\phi(x,y)=x+wy$, where $w$ generates $\mathbb{F}_{q^2}$ over $\mathbb{F}_{q}$.
	For $\vec{v}=(v_{0},\ldots, v_{2n-1}) \in \mathbb{F}_{q}^{2n}$,  $\Phi(\vec{v})=(\phi(v_0,v_n),\phi(v_1,v_{n+1}),\cdots,\phi(v_{n-1},v_{2n-1}))$.
	Clearly, mapping $\Phi$ is a one-to-one mapping, and there is $\mathrm{w}_H(\Phi(\vec{v}))=\mathrm{w}_s(\vec{v})$.
	
	Let $G_s$ denote generator matrix of $\mathscr{C}_s$, then $\Phi(G_s)$ can generate additive code $\mathscr{C}_a$ have parameters $(n,\frac{k}{2},d_s)_{q^2}$. Similarly, $\Phi^{-1}(\mathscr{C}_a)=\mathscr{C}_s$.  Therefore, a symplectic code $\mathscr{C}_s$ with parameters $[2n,k,d_s]^s_{q}$ is equivalent to an additive code $(n,\frac{k}{2} ,d_s)_{q^2}$. 
	Notably, Calderbank et al. \cite{calderbank1998quantum} also proved that binary symplectic inner product and quaternary trace Hermitian inner product are equivalent.
	Therefore, a binary symplectic LCD $[2n,k,d_s]_{2}^s$ code is also a quaternary ACD
	code with parameters $(n,\frac{k}{2} ,d_s)_{4}$. 
	

	In general, for an additive $(n,\frac{k}{2} ,d_s)_{q^2}$ code, where $k$ is even, if 
	the best linear code is $[n,  \frac{k}{2} ,< d_s]_{q^2}$, then we regard $(n,\frac{k}{2} ,d_s)_{q^2}$ as better than linear counterpart in a strong sense. When $k$ is odd, for specific $n$ and $d_s$, if the best linear code is $[n, \lfloor \frac{k}{2}\rfloor ,d_s]_{q^2}$, then $(n,\frac{k}{2} ,d_s)_{q^2}$ is regarded better for the higher information rate.  
	However, there is also a particular case in which the best linear codes are $[n, \lfloor \frac{k}{2}\rfloor,d_1]_{q^2}$ and $[n, \lceil \frac{k}{2}\rceil,d_2]_{q^2}$, with $d_2 < d_s < d_1$. In this case, we can consider that $(n,\frac{k}{2},d_s)_{q^2}$ fills the distance gap of best linear codes.

	\subsection{Cyclic codes and quasi-cyclic codes}
	
	Let $\mathscr{C}$ be an $[n,k]_q$ code over $\mathbb{F}_q$, $\mathscr{C}$  is cyclic provided that for all  $\mathbf{c}=(c_{0}, c_{1}, \cdots, c_{n-1}) \in \mathscr{C}$, the cyclic shift  $\mathbf{c}^{\prime}=(c_{n-1}, c_{0}, \cdots, c_{n-2}) \in \mathscr{C}$.
	Considering each codeword $\mathbf{c}$ as a coefficients vector of polynomial  
	$\mathbf{c}(x)= \sum_{i=0}^{n-1} c_{i} x^{i-1}$ in $\mathbb{F}_{q}[x]$, 
	then $\mathscr{C}$ corresponds to a principal ideal in the quotient ring  
	$\mathbb{R}_{q,n}=\mathbb{F}_{q}[x] /\left\langle x^{n}-1\right\rangle$, 
	which is generated by a unique monic non-zero polynomial $g(x)$ of degree $n-k$. 
	We call $g(x)$ the generator polynomial of cyclic code $\mathscr{C}$, and $\mathscr{C}$ also can be denoted as $\langle g(x) \rangle$. 
	The parity check polynomial of $\mathscr{C}$ is $h(x)=\left(x^{n}-1\right) / g(x)$. 
	The Euclidean dual code $\mathscr{C}^{\perp_e}$ of $\mathscr{C}$ is also a cyclic code with  generator polynomial
	$g^{\perp_e}(x)=\tilde{h}(x)=x^{\deg(h(x))} h(x^{-1})$.

	

	A linear code $\mathscr{C}$ of length $n\ell$ over $\mathbb{F}_{q}$ is called a quasi-cyclic code of index $\ell$ if  $\mathbf{c}=\left(c_{0}, c_{1}, \ldots, c_{n\ell-1}\right)$  is a codeword of $\mathscr{C}$,  then $\mathbf{c}^{\prime}=(c_{n-1},c_0, \ldots, c_{n-2}, c_{2n-1}, c_{n}, \ldots, c_{2n-2}, \ldots, c_{n\ell-1}, c_{(n-1)\ell},$ $ \ldots,  c_{n\ell-2})$ is also a codeword. 
	Circulant matrices are basic components in the generator matrix for quasi-cyclic codes. An $n\times n$ circulant matrix $M$ is defined as
	\begin{equation}
		M=\left(\begin{array}{ccccc}
			m_{0} & m_{1} & m_{2} & \ldots & m_{n-1} \\
			m_{n-1} & m_{0} & m_{1} & \ldots & m_{n-2} \\
			\vdots & \vdots & \vdots & \vdots & \vdots \\
			m_{1} & m_{2} & m_{3} & \ldots & m_{0}
		\end{array}\right).
	\end{equation}
	
	If the first row of  $M$  is mapped onto polynomial  $m(x)$, then circulant matrix  $M$  is isomorphic to polynomial $m(x)=m_{0}+m_{1} x+\cdots+m_{n-1} x^{n-1} \in \mathbb{R}_{q,n}$. So $M$ can be determined by polynomial $m(x)$. 
	Generator matrix of $h$-generator quasi-cyclic code with index $\ell$ has the following form:
	
	\begin{equation}
		G=\left(\begin{array}{cccc}
			M_{1,0} & M_{1,1} & \cdots & M_{1, \ell-1} \\
			M_{2,0} & M_{2,1} & \cdots & M_{2, \ell-1} \\
			\vdots & \vdots & \ddots & \vdots \\
			M_{h, 0} & M_{h, 1} & \cdots & M_{h, \ell-1}
		\end{array}\right),
	\end{equation}
	where  $M_{i, j}$  are circulant matrices generated by the polynomials  $m_{i, j}(x)\in \mathbb{R}_{q,n}$, where  $1 \leq i \leq h$  and  $0 \leq j \leq \ell-1$.

	\section{Bound on the symplectic weight of 1-generator quasi-cyclic codes of index even}\label{III}
	For narrative convenience, we fix $\ell$ as an even number and $m=\ell/2$, throughout this paper.
	Let $[s, t]$ ($s\le t$) denote the set $\{s,s+1,\cdots,t\}$.
	For $g(x)=g_{0}+g_{1} x+g_{2} x+\cdots+ g_{n-1} x^{n-1} \in \mathbb{R}_{q,n}$, $[g(x)]$ denote the vector generated by coefficients of $g(x)$ in $\mathbb{F}_{q}^{n}$, i.e., $[g(x)]=[{{g}_{0}},{{g}_{1}},{{g}_{2}},\cdots ,{{g}_{n-1}}]$. 
	
	First, we introduce the relationship between the symplectic and Hamming weights in Lemma \ref{symplectic_Hamming}.
	
	\begin{lemma}\label{symplectic_Hamming}(\cite{ling2010generalization})
		If  $\vec{x}$, $\vec{y}$ be two vectors in $\mathbb{F}_{q}^{n}$, then there is
		\begin{equation}
			q\cdot\mathrm{w}_{s}(\vec{x} \mid \vec{y})
			=\mathrm{w}_{H}(\vec{x})+\mathrm{w}_{H}(\vec{y})+\sum_{\alpha \in \mathbb{F}_{q}^{*}} \mathrm{w}_{H}(\vec{x}+\alpha \vec{y}).
		\end{equation}
	\end{lemma}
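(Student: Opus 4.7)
The plan is to verify the identity one coordinate at a time: for each $i \in [0,n-1]$ I would compare the contribution of the $i$-th position to both sides, and then sum over $i$ to obtain the lemma. This reduces the statement to a purely local counting check in $\mathbb{F}_q^2$.

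Fix such an $i$. On the left, $q\cdot\mathrm{w}_s(\vec{x} \mid \vec{y})$ picks up $q$ exactly when $(x_i, y_i) \neq (0,0)$, and $0$ otherwise. On the right, $\mathrm{w}_H(\vec{x})$ and $\mathrm{w}_H(\vec{y})$ contribute the indicators of $x_i \neq 0$ and $y_i \neq 0$, while the sum $\sum_{\alpha \in \mathbb{F}_q^*} \mathrm{w}_H(\vec{x}+\alpha\vec{y})$ contributes the count $\#\{\alpha \in \mathbb{F}_q^* : x_i + \alpha y_i \neq 0\}$. I would then split into the four cases determined by which of $x_i, y_i$ vanish.

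Three cases are immediate: if $x_i = y_i = 0$ both sides give $0$; if exactly one of them vanishes, the nonzero Hamming indicator contributes $1$ and all $q-1$ nonzero scalars keep the combination nonzero, yielding $1 + (q-1) = q$. The only case with real content is $x_i, y_i \neq 0$: here the equation $x_i + \alpha y_i = 0$ has the unique root $\alpha = -x_i y_i^{-1}$, which already lies in $\mathbb{F}_q^*$, so exactly $q-2$ nonzero scalars survive and the right side totals $1 + 1 + (q-2) = q$. I do not expect any serious obstacle; the only mild subtlety is remembering to exclude this forbidden $\alpha$ from the pool of $q-1$ nonzero scalars in the final case.
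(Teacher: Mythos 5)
Your proof is correct and complete. Note that the paper itself offers no proof of this lemma --- it is simply quoted from the cited reference --- so there is nothing to compare against; your coordinate-wise reduction to a four-case count in $\mathbb{F}_q^2$ is the standard argument, and you handle the one nontrivial case correctly by observing that when $x_i, y_i \neq 0$ the unique root $\alpha = -x_i y_i^{-1}$ of $x_i + \alpha y_i = 0$ already lies in $\mathbb{F}_q^*$, leaving exactly $q-2$ surviving scalars so that each nonzero position contributes $1+1+(q-2)=q$ to the right-hand side, matching the factor $q$ on the left.
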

	
	\begin{definition}\label{one_quasi-cyclic_def}
		Let  $g(x)$ and ${{f}_{j}}(x)$ are polynomials in $\mathbb{R}_{q,n}$, $g(x)\mid({{x}^{n}}-1)$, where $j\in[0,\ell-1]$. If $\mathscr{C}$ is a quasi-cyclic code generated by $([g(x){{f}_{0}}(x)]$, $[g(x){{f}_{1}}(x)]$, $\cdots$, $[g(x){{f}_{\ell -1}}(x)])$, then $\mathscr{C}$ is called $1$-generator quasi-cyclic code with index $\ell$.
		The generator matrix $G$ of $\mathscr{C}$  have the following form: 
		\begin{equation}
			G=\left(G_{0}, G_{1}, \cdots, G_{\ell-1}\right) ,
		\end{equation}
		where $G_ {j} $ are $n\times n$  circulant matrices generated by $\left [g (x) f_{j} (x) \right] $.
	\end{definition}

	As a special class of quasi-cyclic codes, $1$-generator quasi-cyclic codes can be regarded as linear codes generated by juxtaposing multiple cyclic codes. The following theorem determines a lower bound on the symplectic distances of $1$-generator quasi-cyclic codes with even index.
	
	\begin{theorem}\label{one_quasi-cyclic}
		Suppose $\mathscr{C}$ is a $1$-generator quasi-cyclic code in Definition \ref{one_quasi-cyclic_def} of index $\ell$. 
		If $gcd(f_j(x)+\alpha f_{j+m}(x),\frac{x^n-1}{g(x)})=1$, and $deg(f_j(x)f_{j+m}(x))\ge 1$, $\alpha\in \mathbb{F}_{q}$, $i\in[0,\ell-1]$, $j\in[0,m-1]$, then the following equation holds.
		\begin{equation}\label{lower_bound}
			d_s(\mathscr{C})\ge m\cdot\left \lceil \frac{q+1}{q} d(g(x))\right \rceil,
		\end{equation}
		where $d_s(\mathscr{C})$ is the symplectic distance of $\mathscr{C}$, $d(g(x))$ is the  minimum Hamming distance of cyclic code $\left \langle g(x) \right \rangle $.
	\end{theorem}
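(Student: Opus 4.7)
The plan is to reduce the symplectic-weight bound for $\mathscr{C}$ to a Hamming-weight bound on codewords of the underlying cyclic code $\langle g(x)\rangle$, by applying Lemma~\ref{symplectic_Hamming} block-by-block.

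First I would parametrize a generic codeword. By Definition~\ref{one_quasi-cyclic_def}, every codeword of $\mathscr{C}$ has the form $\mathbf{c}=(c_0,\ldots,c_{\ell-1})$ with $c_j=[a(x)g(x)f_j(x)\bmod(x^n-1)]$ for some $a(x)\in\mathbb{R}_{q,n}$, and $\mathbf{c}\neq\mathbf{0}$ is equivalent to $a(x)g(x)\not\equiv 0$ in $\mathbb{R}_{q,n}$ (once the coprimality hypothesis at $\alpha=0$ is used to force $c_j\neq 0$ whenever $a(x)g(x)\neq 0$). The $n\ell=2nm$ coordinates naturally split into a ``left half'' (blocks $0,\ldots,m-1$) and a ``right half'' (blocks $m,\ldots,\ell-1$), which paired off by the symplectic coordinate pairing gives
\begin{equation*}
w_s(\mathbf{c})=\sum_{j=0}^{m-1}w_s(c_j\mid c_{j+m}).
\end{equation*}

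Next I would apply Lemma~\ref{symplectic_Hamming} to each pair:
\begin{equation*}
q\cdot w_s(c_j\mid c_{j+m})=w_H(c_j)+w_H(c_{j+m})+\sum_{\alpha\in\mathbb{F}_q^*}w_H(c_j+\alpha c_{j+m}).
\end{equation*}
Each vector on the right equals $[a(x)g(x)h(x)]$ with $h(x)$ ranging over the $q+1$ polynomials $f_j$, $f_{j+m}$, and $f_j+\alpha f_{j+m}$ for $\alpha\in\mathbb{F}_q^*$; in particular every one of them lies in the cyclic code $\langle g(x)\rangle$. The coprimality hypothesis — read projectively so that the ``$\alpha=\infty$'' combination $f_{j+m}$ is covered in addition to the explicit family $\{f_j+\alpha f_{j+m}:\alpha\in\mathbb{F}_q\}$ — implies each such $h(x)$ is a unit modulo $(x^n-1)/g(x)$, so multiplication by $h(x)$ is injective on $\langle g(x)\rangle\cong\mathbb{R}_{q,n}/\langle(x^n-1)/g(x)\rangle$. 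Hence each of the $q+1$ vectors is a nonzero codeword of $\langle g(x)\rangle$ and has Hamming weight at least $d(g(x))$. Summing the $q+1$ lower bounds yields $q\cdot w_s(c_j\mid c_{j+m})\ge(q+1)d(g(x))$, and integrality gives $w_s(c_j\mid c_{j+m})\ge\lceil\tfrac{q+1}{q}d(g(x))\rceil$.

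Summing over $j\in[0,m-1]$ then produces the desired bound~\eqref{lower_bound}. The main obstacle I anticipate is the bookkeeping to ensure that \emph{all} $q+1$ combinations appearing on the right-hand side of Lemma~\ref{symplectic_Hamming} yield nonzero cyclic-code codewords; this is the step that forces the projective reading of the gcd hypothesis so as to include the combination $f_{j+m}$ (and not only the affine family indexed by $\alpha\in\mathbb{F}_q$). Once this point is secured, the remainder is a direct consequence of Lemma~\ref{symplectic_Hamming} together with the definition of $d(g(x))$ as the minimum distance of $\langle g(x)\rangle$.
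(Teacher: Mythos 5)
Your proof is correct and follows essentially the same route as the paper: split the codeword into the two halves indexed by $j$ and $j+m$, apply Lemma~\ref{symplectic_Hamming}, and use the gcd hypothesis (read projectively, exactly as you note, so that $f_{j+m}$ itself is also coprime to $(x^n-1)/g(x)$ --- the paper silently assumes this too) to guarantee that all $q+1$ combinations are nonzero codewords of $\langle g(x)\rangle$ of weight at least $d(g(x))$. Your only deviation is applying Lemma~\ref{symplectic_Hamming} to each block pair $(c_j\mid c_{j+m})$ separately rather than to the two halves at once, which is if anything a slight improvement, since it yields the bound in the stated form $m\left\lceil\frac{q+1}{q}d\right\rceil$ directly, whereas the paper's global application literally gives only $\left\lceil m\cdot\frac{q+1}{q}d\right\rceil$.
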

	\begin{proof}
		Given that $a(x)$ is any polynomial in $\mathbb{R}_{q,n}$, then any codeword of $\mathscr{C}$ can be denoted as $\mathbf{c}=([a(x) f_{0}(x) g(x)],[a(x) f_{1}(x) g(x)],$ $\cdots, [a(x) f_{\ell-1}(x) g(x)])$.
		
		Let $\mathbf{c_1}=$ $([a(x) f_{0}(x) g(x)],$ $[a(x) f_{1}(x) g(x)],$ $\cdots,$ $[a(x) $ $f_{m-1}(x) g(x)])$, 
		$\mathbf{c_2}=$ $([a(x) f_{m}(x) g(x)]$, $[a(x) f_{m+1}(x) g(x)]$, $\cdots$, $[a(x) f_{\ell-1}(x) g(x)])$,
		and $\mathbf{c_3}=\mathbf{c_1}+\alpha \mathbf{c_2}= ([a(x)g(x)(f_{0}(x)+\alpha f_{m}(x))],$ $[a(x) g(x)(f_{1}(x)+\alpha f_{m+1}(x))],$
		$\cdots, [a(x) g(x)(f_{m-1}(x)+\alpha f_{\ell-1}(x)) ])$, respectively. 
		By Lemma \ref{symplectic_Hamming}, the symplectic weight of $\mathbf{c}$ is
		
		$$\begin{array}{l}
			\mathrm{w}_{s}(\mathbf{c})
			=\left(\mathrm{w}_{H}(\mathbf{c_1})+\mathrm{w}_{H}(\mathbf{c_2})+\sum\limits_{\alpha \in \mathbb{F}_{q}^{*}} \mathrm{w}_{H}(\mathbf{c_3})\right)/q\\
			= \left(\mathrm{w}_{H}(\mathbf{c_1})+\mathrm{w}_{H}(\mathbf{c_2})\right)/q  \\
			+ \sum\limits_{i=0}^{m} \sum\limits_{\alpha \in \mathbb{F}_{q}^{*}}\mathrm{w}_H([a(x) g(x)(f_{i}(x)+\alpha f_{m+i}(x))]) /q. 
		\end{array}$$
		
		For the reason that $gcd(f_i(x),\frac{x^n-1}{g(x)} )=1$, $i\in[0, \ell-1]$, there are $\mathrm{w}_H(\mathbf{c_1})\ge m\cdot d(g(x))$ and $\mathrm{w}_H(\mathbf{c_2})\ge m\cdot d(g(x))$.
		In addition, it is easy to verify that when $gcd(f_j(x)+\alpha f_{j+m}(x),\frac{x^n-1}{g(x)})=1$, there is $\mathrm{w}_H([a(x) g(x)(f_{i}(x)+\alpha f_{m+i}(x))])\ge d(g(x))$.
		Therefore, the following formula holds.
		$$\begin{array}{rl} 
			\mathrm{w}_{s}(\mathbf{c})\ge&  2m\cdot d(g(x))/q+ (q-1)m\cdot d(g(x))/q\\ 
			\ge&  m\cdot \left \lceil \frac{q+1}{q} d(g(x))\right \rceil.   
		\end{array}$$
	\end{proof}
	\begin{lemma}\label{coprime}
		Let $gcd(n,q)=1$, $g(x)\mid({{x}^{n}}-1)$, and $f(x)$ are polynomials in $\mathbb{R}_{q,n}$.
		If $f(x)\mid g(x)$, then there is $gcd(f(x)+\alpha,\frac{x^n-1}{g(x)})=1$, $\alpha\in \mathbb{F}_{q}$.
	\end{lemma}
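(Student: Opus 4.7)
The plan is to work prime-by-prime, reducing the coprimality claim to showing that no irreducible factor of $h(x) := (x^n-1)/g(x)$ divides $f(x)+\alpha$ for any $\alpha \in \mathbb{F}_q$. The central input is that $\gcd(n,q)=1$ forces $x^n-1$ to be squarefree over $\mathbb{F}_q$, so that $g(x)$ and $h(x)$ share no common irreducible factor.

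First, I would record the standard consequence of $\gcd(n,q)=1$: the formal derivative $nx^{n-1}$ is coprime to $x^n-1$, hence $x^n-1$ factors into pairwise distinct irreducibles in $\mathbb{F}_q[x]$. Since $g(x)h(x)=x^n-1$, this immediately gives $\gcd(g(x),h(x))=1$. From $f(x)\mid g(x)$, every irreducible divisor of $f(x)$ then divides $g(x)$ and therefore cannot divide $h(x)$; this already settles the case $\alpha=0$, for which $\gcd(f(x),h(x))=1$ follows at once.

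For $\alpha\neq 0$ I would pass to the algebraic closure. Let $p(x)$ be any irreducible factor of $h(x)$ with a root $\beta\in\overline{\mathbb{F}_q}$, so that $\beta^n=1$. By the previous step $\beta$ is not a root of $g(x)$ and hence not a root of $f(x)$, so $f(\beta)\neq 0$. It then remains to rule out $f(\beta)=-\alpha$; once this is verified for every root of $h(x)$ and every nonzero $\alpha\in\mathbb{F}_q$, no irreducible $p(x)\mid h(x)$ can divide $f(x)+\alpha$, and the lemma follows.

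The main obstacle will be precisely this last step: upgrading $f(\beta)\neq 0$ to $f(\beta)\notin -\mathbb{F}_q^{*}$ is not purely formal and is where the proof has real content. I expect it to require leveraging the chain $f\mid g\mid x^n-1$ together with the Frobenius orbit structure of $\beta$ on the roots of $h(x)$, and possibly a mild supplementary hypothesis on $f$ (for instance a degree restriction, or $f$ being a strict divisor of $g$) to exclude degenerate configurations where $f+\alpha$ could accidentally pick up a factor of $h(x)$. This is where I would concentrate the bulk of the argument.
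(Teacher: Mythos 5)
You have correctly reduced the problem to its essential difficulty, and your suspicion that the final step needs a supplementary hypothesis is well founded: that step cannot be carried out, because the lemma is false as stated. Your first two paragraphs match the paper's own argument: $\gcd(n,q)=1$ makes $x^n-1$ squarefree, so $g$ and $h:=(x^n-1)/g$ are coprime, and $f\mid g$ gives $\gcd(f,h)=1$, which settles $\alpha=0$. But for $\alpha\neq 0$ there is no way to upgrade $f(\beta)\neq 0$ to $f(\beta)\neq -\alpha$ in this generality. Concretely, take $q=2$, $n=7$, $g(x)=f(x)=x^3+x+1$. Then $h(x)=(x+1)(x^3+x^2+1)$, while $f(x)+1=x^3+x=x(x+1)^2$, so $\gcd(f(x)+1,h(x))$ is divisible by $x+1$; a similar failure occurs for $n=15$ with $g(x)=(x+1)(x^4+x+1)$ and $f(x)=x^4+x+1$, where $f(x)+1=x(x+1)(x^2+x+1)$ meets the factor $x^2+x+1$ of $h(x)$. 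The paper's proof disposes of precisely this case with the sentence ``as $f(x)+\alpha$ is not a factor of $\frac{x^n-1}{g(x)}$, $\gcd(f(x)+\alpha,\frac{x^n-1}{g(x)})=1$'', which is a non sequitur: a polynomial need not divide $h$ in order to share an irreducible factor with it. So the gap you identified is real, and it is present in the paper's own proof as well.

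The statement can be salvaged in the form in which it is actually applied elsewhere in the paper, namely $q=2$ and $f(x)=x+1$ with $(x+1)\mid g(x)$: then $f(x)+\alpha$ is either $x+1$, which divides $g$ and is coprime to $h$ by squarefreeness, or $x$, whose only root is $0$ and which is therefore coprime to $x^n-1$ altogether. The correct general hypothesis is that for every $\alpha\in\mathbb{F}_q$ the polynomial $f(x)+\alpha$ either divides $g(x)$ or has no root among the $n$-th roots of unity; under that assumption your root-by-root argument over $\overline{\mathbb{F}_q}$ closes immediately. Note that even the degree-one case is delicate for $q>2$ (for $q=3$, $n=2$, $g=f=x-1$ and $\alpha=2$ one gets $f+\alpha=x+1=h$), so the restriction to $q=2$, or the reformulated hypothesis, genuinely matters.
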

	\begin{proof}
		For $gcd(n,q)=1$, $x^n-1$ has no repeated irreducible factors over split field, so if $f(x)\mid g(x)$, then $gcd(f(x),\frac{x^n-1}{g(x)} )=1$.
		In addition, as $f(x)+\alpha$ is not a factor of $\frac{x^n-1}{g(x)}$, $gcd(f(x)+\alpha,\frac{x^n-1}{g(x)})=1$.
	\end{proof}
	
	\begin{corollary}\label{apply_2a}
		When $gcd(n,q)=1$ and $k<\frac{n}{2}$, if there exists a $\mathbb{F}_q$- cyclic code of parameters $[n,k,d]_q$, then there also exist $\mathbb{F}_{q^2}$- additive codes  with parameters $(mn,\frac{k}{2},\ge m\cdot \left \lceil \frac{q+1}{q} d\right \rceil)_{q^2}$.
	\end{corollary}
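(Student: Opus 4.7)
The plan is to exhibit the required additive code as the image under $\Phi$ of a suitable 1-generator quasi-cyclic code of even index $\ell = 2m$, and then read off the parameters directly from Theorem \ref{one_quasi-cyclic}. Let $g(x)$ be the generator polynomial of the given cyclic $[n,k,d]_q$ code, so $\deg g = n-k$. Since $k < n/2$, we have $\deg g \geq 1$, so $g(x)$ admits a monic divisor $f(x) \in \mathbb{R}_{q,n}$ with $\deg f \geq 1$ (any irreducible factor of $g$ works). I would then set $f_j(x) = 1$ for $j \in [0,m-1]$ and $f_{j+m}(x) = f(x)$ for $j \in [0,m-1]$, and take $\mathscr{C}$ to be the 1-generator quasi-cyclic code of index $\ell$ determined by $g(x)$ and these $f_j(x)$ as in Definition \ref{one_quasi-cyclic_def}.

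Next I would verify the hypotheses of Theorem \ref{one_quasi-cyclic}. For $\alpha \in \mathbb{F}_q^{*}$, we have $f_j(x) + \alpha f_{j+m}(x) = 1 + \alpha f(x) = \alpha\bigl(f(x) + \alpha^{-1}\bigr)$, and Lemma \ref{coprime} (using $\gcd(n,q)=1$ together with $f \mid g$) yields $\gcd\bigl(f(x)+\alpha^{-1},(x^n-1)/g(x)\bigr)=1$; for $\alpha=0$ the gcd is trivially $1$. The degree condition $\deg(f_j f_{j+m}) = \deg f \geq 1$ holds by the choice of $f$, and the auxiliary condition $\gcd(f_i,(x^n-1)/g(x))=1$ for every $i \in [0,\ell-1]$ is immediate for $f_i = 1$ and follows from squarefreeness of $x^n-1$ over $\mathbb{F}_q$ for $f_i = f \mid g$.

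To conclude, I would compute $\dim_{\mathbb{F}_q}\mathscr{C}$ by viewing $\mathscr{C}$ as the image of the $\mathbb{F}_q$-linear map $a(x) \mapsto (a f_0 g, a f_1 g, \ldots, a f_{\ell-1} g)$ on $\mathbb{R}_{q,n}$. Since the first coordinate is $a(x) g(x) \bmod (x^n-1)$, the kernel equals $\{a : h(x) \mid a(x)\}$, which has $\mathbb{F}_q$-dimension $n-k$; hence $\dim_{\mathbb{F}_q}\mathscr{C} = k$. Thus $\mathscr{C}$ is a $[2mn,k]_q$ symplectic code with $d_s(\mathscr{C}) \geq m\lceil (q+1)d/q\rceil$ by Theorem \ref{one_quasi-cyclic}, and applying $\Phi$ produces the claimed additive code $(mn,\tfrac{k}{2},\geq m\lceil (q+1)d/q\rceil)_{q^2}$. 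The only point requiring care is the split into $\alpha = 0$ versus $\alpha \neq 0$ in the gcd verification together with invoking $\gcd(n,q)=1$ to get squarefreeness of $x^n-1$; no deeper obstacle is expected, and the argument remains short.
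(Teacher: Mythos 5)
Your proof is correct and follows the paper's own (one-line) argument: instantiate Theorem \ref{one_quasi-cyclic} with $f_j=1$ and $f_{j+m}$ a nontrivial divisor of $g$, then use Lemma \ref{coprime} to verify the gcd hypotheses — the paper's examples make exactly this choice (with the roles of $f_j$ and $f_{j+m}$ swapped, which is immaterial). The details you add beyond the paper's ``combine Theorem \ref{one_quasi-cyclic} and Lemma \ref{coprime}'' — the $\alpha=0$ case, the auxiliary condition $\gcd(f_i(x),(x^n-1)/g(x))=1$ used inside the theorem's proof, and the dimension count $\dim_{\mathbb{F}_q}\mathscr{C}=k$ — are all sound and correctly fill in what the paper leaves implicit.
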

	\begin{proof}
		It is easy to conclude that Corollary \ref{apply_2a} holds by combining Theorem \ref{one_quasi-cyclic} and Lemma \ref{coprime}.
	\end{proof}
	
	The following lemma will help to determine the optimality of low-dimensional additive codes.
	\begin{lemma}\label{Jopt}
		Let $\mathscr{C}_a$ be an quaternary additive code $(n,k,d)_4$, with $k\ge1$. Then, \begin{equation}\label{AGB}
			3n \geq \sum_{i=0}^{2k-1}\left\lceil\frac{d}{2^{i-1}}\right\rceil.
		\end{equation} 
	\end{lemma}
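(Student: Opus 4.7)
The plan is to proceed by induction on the $\mathbb{F}_4$-dimension $k$, mimicking the proof of the classical Griesmer bound via a residual (punctured) code construction. For the base case $k=1$ the right-hand side collapses to $\lceil 2d\rceil+\lceil d\rceil=3d$, so the inequality reduces to the trivial $n\ge d$.

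For the inductive step, choose a codeword $c\in\mathscr{C}_a$ with $\mathrm{w}_H(c)=d$, set $S=\mathrm{supp}(c)$, and puncture $\mathscr{C}_a$ on $S$ to obtain the residual additive code $\mathscr{C}_a'$ of length $n-d$. Since $c$ lies in the kernel of the puncturing map, the $\mathbb{F}_4$-dimension of $\mathscr{C}_a'$ is at most $k-\tfrac12$; denote its minimum Hamming distance by $d'$. Provided the residual-distance bound $d'\ge\lceil d/2\rceil$ is in hand, the inductive hypothesis applied to $\mathscr{C}_a'$, combined with $2\lceil d/2\rceil\ge d$ and the identity $\bigl\lceil\lceil d/2\rceil/2^{i-1}\bigr\rceil=\lceil d/2^{i}\rceil$ (valid for $d>0$ and integer $i\ge 1$), yields the chain
\[
3n\ge 3d+\sum_{i=0}^{2k-2}\lceil d'/2^{i-1}\rceil\ge 3d+d+\sum_{i=1}^{2k-2}\lceil d/2^i\rceil\ge\sum_{i=0}^{2k-1}\lceil d/2^{i-1}\rceil,
\]
which closes the induction.

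The delicate point is the residual-distance estimate $d'\ge\lceil d/2\rceil$. In the $\mathbb{F}_4$-linear setting this is standard: the four scalar multiples $\alpha c$, $\alpha\in\mathbb{F}_4$, all belong to the code, and averaging the inequality $\mathrm{w}_H(c'+\alpha c)\ge d$ over them delivers the bound. An additive code, however, is only $\mathbb{F}_2$-linear, and only the pair $\{0,c\}$ is guaranteed inside $\mathscr{C}_a$, so a nonzero residual codeword $\bar{c'}$ admits only the two preimages $c', c'+c$. A position-by-position analysis on $S$ then yields only the weaker relation $2\mathrm{w}_H(\bar{c'})\ge d-b$, where $b=|\{i\in S:c_i'\notin\{0,c_i\}\}|$, and this degenerates as $b$ approaches $d$. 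Overcoming this is the main obstacle: one must exploit finer additive structure—e.g.\ replacing $c'$ by a suitable $\mathbb{F}_2$-combination with another codeword to force $b$ to be small, or passing to the binary symplectic image $\Phi^{-1}(\mathscr{C}_a)\subseteq\mathbb{F}_2^{2n}$ of dimension $2k$ and combining the classical binary Griesmer bound with the inequality $\mathrm{w}_s\le\mathrm{w}_H$—to recover an estimate strong enough for the induction to close.
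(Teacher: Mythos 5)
Your induction never gets off the ground, because the residual-distance estimate $d'\ge\lceil d/2\rceil$ --- which you correctly flag as the delicate point --- is precisely the step that fails for additive codes, and you do not supply a substitute. As you yourself compute, with only the two preimages $c'$ and $c'+c$ available one obtains merely $2\,\mathrm{w}_H(\bar{c'})\ge d-b$, and $b$ can be as large as $d$; none of the repairs you sketch is carried out, and the one concrete fallback (apply the binary Griesmer bound to $\Phi^{-1}(\mathscr{C}_a)$, a binary linear code of length $2n$, dimension $2k$ and distance $\ge d$) proves only $2n\ge\sum_{i=0}^{2k-1}\lceil d/2^{i}\rceil$, which is strictly weaker than the statement: at $k=1$ it gives roughly $n\ge 3d/4$ rather than the required $n\ge d$. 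So what you have is a plan with an acknowledged hole, not a proof.

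The missing idea is to change the alphabet by \emph{concatenation} rather than by puncturing; this is the paper's entire (one-line) proof. Fix an $\mathbb{F}_2$-linear encoder $\mathbb{F}_4\cong\mathbb{F}_2^2\to\mathbb{F}_2^3$ for the binary $[3,2,2]_2$ code and apply it coordinatewise to $\mathscr{C}_a$. Because $\mathscr{C}_a$ is $\mathbb{F}_2$-linear of $\mathbb{F}_2$-dimension $2k$, and every nonzero symbol of $\mathbb{F}_4$ is sent to a binary word of weight at least $2$, the image is a binary linear $[3n,2k,\ge 2d]_2$ code. The classical binary Griesmer bound applied to it reads $3n\ge\sum_{i=0}^{2k-1}\lceil 2d/2^{i}\rceil=\sum_{i=0}^{2k-1}\lceil d/2^{i-1}\rceil$, which is exactly the lemma. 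This sidesteps the residual-code machinery entirely and exploits the $\mathbb{F}_2$-linearity of $\mathscr{C}_a$ only where it is actually available.
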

	\begin{proof}
		The concatenated code of $(n, k, d)_4$ code and $[3,2,2]_{2}$ is $[3n,2k,2d]_2$. By the Griesmer bound, there is Equation (\ref{AGB}), so this lemma holds.
	\end{proof}
	
	\begin{example}
		Let $q=2$, $n=31$, taking generator polynomial $g(x)=x^{26} + x^{24} + x^{22} + x^{21} + x^{20} + x^{18} + x^{17} + x^{13} + x^{12} + x^{11} + x^{10} + x^9 + x^6 + x^5 + x^3 + 1$ from Chen's Database \cite{Chentable}, this will generate a binary cyclic code $\mathscr{C}$ of parameters $[31,5,16]_2$. 
		Through Theorem \ref{one_quasi-cyclic} and Lemma \ref{coprime},
		choosing $f_0(x)=x+1$ and $f_{1}(x)=1$, we can obtain quaternary additive codes with parameters $(31,2.5,\ge 24)_4$. By virtue of Lemma \ref{Jopt},  $(31,2.5, 24)_4$ is optimal additive code, which also has better performance than optimal quaternary linear code with parameters $[31,2, 24]_4$.
		
		It should be noted that similar results were also obtained by Bierbrauer et al. \cite{bierbrauer2021optimal}. However, the approach in this paper is more concise, and our codes have a cyclic (or quasi-cyclic) structure, which makes ours easier to encode and decode.
	\end{example}
	


	\begin{theorem}\label{quasi-cyclic_bound}
		If $\mathscr{C}$ is a $1$-generator quasi-cyclic code  with parameters $[t  n,k,d]_q$ of index $t $. Set polynomials $f_l(x),f_r(x)$ satisfy $gcd(f_l(x)+\alpha f_r(x),\frac{x^n-1}{g(x)})=1$, $\alpha\in \mathbb{F}_{q}$, and $deg(f_l(x)f_{r}(x))\ge 1$;
		then, there also exists an additive code have parameters $(t  n,\frac{k}{2}, \ge \left\lceil \frac{q+1}{q} d \right\rceil)_{q^2}$. 
	\end{theorem}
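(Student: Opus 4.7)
The plan is to construct an auxiliary 1-generator quasi-cyclic code $\mathscr{C}'$ of index $2t$ and length $2tn$ by \emph{doubling} $\mathscr{C}$: send each $\mathbf{c}\in\mathscr{C}$ to $\mathbf{c}':=(\mathbf{c}\ast f_l,\,\mathbf{c}\ast f_r)\in\mathbb{F}_q^{2tn}$, where $\ast$ denotes block-wise multiplication in $\mathbb{R}_{q,n}$. I would then recycle the symplectic–Hamming identity (Lemma~\ref{symplectic_Hamming}) exactly as in the proof of Theorem~\ref{one_quasi-cyclic}, but with the outer code $\mathscr{C}$ playing the role that the simple cyclic code $\langle g(x)\rangle$ played there. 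In effect Theorem~\ref{one_quasi-cyclic} becomes the special case in which $\mathscr{C}$ is just several copies of $\langle g(x)\rangle$ strung side by side.

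Concretely, for any nonzero $\mathbf{c}'\in\mathscr{C}'$, pairing the $j$-th length-$n$ block of $\mathbf{c}\ast f_l$ with the $j$-th block of $\mathbf{c}\ast f_r$ and applying Lemma~\ref{symplectic_Hamming} gives
\begin{equation*}
q\cdot\mathrm{w}_s(\mathbf{c}')=\mathrm{w}_H(\mathbf{c}\ast f_l)+\mathrm{w}_H(\mathbf{c}\ast f_r)+\sum_{\alpha\in\mathbb{F}_q^{*}}\mathrm{w}_H\bigl(\mathbf{c}\ast(f_l+\alpha f_r)\bigr).
\end{equation*}
The key observation is that each vector $\mathbf{c}\ast(f_l+\alpha f_r)$ is again a codeword of $\mathscr{C}$ (because $\mathscr{C}$ is closed under block-wise multiplication by a fixed polynomial), so each Hamming weight on the right is either $0$ or at least $d$. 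Taking $\alpha=0$ in the hypothesis yields $\gcd(f_l,(x^n-1)/g)=1$, which simultaneously makes the doubling map injective (so $\dim_{\mathbb{F}_q}\mathscr{C}'=k$) and forces $\mathbf{c}\ast f_l\neq 0$; for general $\alpha$ the hypothesis guarantees $\mathbf{c}\ast(f_l+\alpha f_r)\neq 0$. Thus every term in the sum is at least $d$, so $q\cdot\mathrm{w}_s(\mathbf{c}')\geq 2d+(q-1)d=(q+1)d$, hence $\mathrm{w}_s(\mathbf{c}')\geq\lceil(q+1)d/q\rceil$. Applying $\Phi$ converts $\mathscr{C}'$ into the advertised additive code with parameters $(tn,k/2,\geq\lceil(q+1)d/q\rceil)_{q^2}$.

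The main point requiring care is the non-vanishing claim $\mathbf{c}\ast(f_l+\alpha f_r)\neq 0$. Working block by block, every component of a codeword of $\mathscr{C}$ is divisible by $g(x)$, so the claim reduces to showing that if $u(x)g(x)\not\equiv 0\pmod{x^n-1}$ then $u(x)g(x)(f_l(x)+\alpha f_r(x))\not\equiv 0\pmod{x^n-1}$, and this is exactly what $\gcd(f_l+\alpha f_r,(x^n-1)/g)=1$ delivers. The technical condition $\deg(f_l f_r)\geq 1$ rules out the degenerate case where $f_l,f_r$ are both constants, in which some $\alpha$ would annihilate $f_l+\alpha f_r$ and violate the gcd hypothesis. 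Beyond this, no further obstacles arise: the quasi-cyclicity of $\mathscr{C}'$ and its $1$-generator structure are immediate from the construction, since block-wise multiplication by a fixed polynomial commutes with the cyclic shift.
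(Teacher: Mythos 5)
Your proposal is correct and takes essentially the same route as the paper: both form the doubled code $(\mathbf{c}\ast f_l\mid \mathbf{c}\ast f_r)$, apply the symplectic--Hamming identity of Lemma~\ref{symplectic_Hamming}, and use the hypothesis $\gcd(f_l+\alpha f_r,(x^n-1)/g)=1$ to bound each Hamming-weight term below by $d$. Your added remarks on injectivity of the doubling map and the non-vanishing of $\mathbf{c}\ast(f_l+\alpha f_r)$ are welcome refinements of details the paper leaves implicit, but they do not change the argument.
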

	\begin{proof}
		Suppose generator of $\mathscr{C}$ is $\mathbf{g(x)}=([g(x){{f}_{0}}(x)]$, $[g(x){{f}_{1}}(x)]$, $\cdots$, $[g(x){{f}_{t -1}}(x)])$. Set $\mathscr{C}^{\prime}$ is a 1-generator quasi-cyclic code with generator 
		$\mathbf{g^{\prime}(x)}=(\mathbf{g(x)}f_l(x)\mid \mathbf{g(x)}f_r(x))$. Let $a(x)$ be any polynomial in $\mathbb{R}_{q,n}$, then any codeword in $\mathscr{C}^{\prime}$ can be denoted as $\mathbf{c}^{\prime}= (a(x)\mathbf{g(x)}f_l(x)\mid a(x)\mathbf{g(x)}f_r(x))$. Let $\mathbf{c_1}^{\prime}=(a(x)\mathbf{g(x)}f_l(x))$, 
		$\mathbf{c_2}^{\prime}=(a(x)\mathbf{g(x)}f_r(x))$,
		and $\mathbf{c_3}^{\prime}=\mathbf{c_1}^{\prime}+ \alpha \mathbf{c_2}^{\prime}$, respectively. 
		With the help of Lemma \ref{symplectic_Hamming}, the symplectic weight of $\mathbf{c}^{\prime}$ is given by the following equation.
		$$\begin{array}{l}
			\mathrm{w}_{s}(\mathbf{c}^{\prime})
			=\left(\mathrm{w}_{H}(\mathbf{c_1}^{\prime})+\mathrm{w}_{H}(\mathbf{c_2}^{\prime})+\sum\limits_{\alpha \in \mathbb{F}_{q}^{*}} \mathrm{w}_{H}(\mathbf{c_3}^{\prime})\right)/q.\\
		\end{array}$$

		Since,  $\mathbf{c_1}^{\prime}, \mathbf{c_2}^{\prime}\in \mathscr{C}$, there are $\mathrm{w}_H(\mathbf{c_1}^{\prime}) \ge d$, and $\mathrm{w}_H(\mathbf{c_2}^{\prime}) \ge d$. 
		In addition, for the reason that $\forall \alpha  \in \mathbb{F}_q$, $gcd(f_l(x)+\alpha f_r(x),\frac{x^n-1}{g(x)})=1$, and $\mathbf{c_3}^{\prime}=a(x)(f_l(x)+\alpha f_r(x))(\mathbf{g(x)})$; hence, $\sum\limits_{\alpha \in \mathbb{F}_{q}^{*}} \mathrm{w}_{H}(\mathbf{c_3^{\prime}}) \ge (q-1)d$.
		Therefore, the following formula holds.
		$$\begin{array}{rl} 
			\mathrm{w}_{s}(\mathbf{c}^{\prime})\ge&  2  d/q+ (q-1)  d/q\\ 
			\ge&  \left \lceil \frac{q+1}{q} d\right \rceil.   
		\end{array}$$
		
		Then it is clear that $\mathscr{C}^{\prime}$ is an symplectic code have parameters $\left[2t  n,k,\ge \left \lceil \frac{q+1}{q} d\right \rceil\right]_q^s$, which corresponds to an additive $\left(t  n,\frac{k}{2}, \ge\left\lceil \frac{q+1}{q} d \right\rceil\right)_{q^2}$ code.

	\end{proof}
	\begin{example}
		Let $q=2$, $n=127$, taking polynomial $h(x)=x^7 + x^6 + x^5 + x^3 + x^2 + x + 1$, then $g(x)=\frac{x^n-1}{h(x)}$ will generate an optimal binary cyclic code $\mathscr{C}$ of parameters $[127,7,64]_2$.  
		Through Lemma \ref{coprime},
		choosing $f_0(x)=x+1$ and $f_{1}(x)=1$; then $([g(x)f_0(x)],[g(x)f_1(x)])$ can generate a quasi-cyclic code $\mathscr{C}_l$ with parameters $[254,7,128]_2$. By Theorem \ref{one_quasi-cyclic} and Lemma \ref{Jopt}, $\Phi(\mathscr{C}_l)$ is an optimal quaternary $(127,3.5, 96)_4$\footnote{This code is also obtained by Guo et al. in \cite{guo2017construction}, but our construction is simpler and has a cyclic structure.}, which has better performance than optimal quaternary linear codes $[127,3, 96]_4$ in \cite{Grassltable}. In addition, with Theorem \ref{quasi-cyclic_bound}, taking $f_l(x)=f_0(x)$, $f_r(x)=1$; then, $([g(x)f_0(x)f_l(x)]$, $[g(x)f_1(x)f_l(x)]$, $[g(x)f_0(x)f_r(x)]$, $[g(x)f_1(x)f_r(x)])$ will generate a $[508,7,\ge 192]_2^s$ symplectic code. So, by Lemma \ref{Jopt}, there exist optimal $(254,3.5,192)_4$ additive code, which also outperform optimal quaternary linear codes $[254,3, 192]_4$. 
		
	\end{example}
	\begin{remark}
		Lemma \ref{coprime} gives only one way to select $f_i(x)$ that satisfies Theorem \ref{one_quasi-cyclic} and \ref{quasi-cyclic_bound}. This ensures that the distance of the resulting additive code is greater than or equal to the lower bound but is not necessarily the best. For additive codes of dimension greater than 3.5, a computational search would be an efficient way to construct additive codes with suitable parameters.
		An efficient search approach is to choose the generator polynomial of best cyclic code as $g(x)$. It is more probable to find codes with considerable distance even if the lower bound of the distance derived from Theorem \ref{one_quasi-cyclic} and \ref{quasi-cyclic_bound} is not large. Similar to the ASR algorithm, proposed by Aydin, Siap and Ray-Chaudhuri in \cite{aydin2001structure}, this can help us search for good additive codes.
	\end{remark}
	\section{Good quaternary additive codes outperform best-known linear codes}\label{IV}
	This section focuses on constructing good additive codes. 
	We construct many additive codes superior to linear counterparts, specifically, some of which perform better than best-known quaternary linear codes in \cite{Grassltable}. Moreover, employing a combinatorial approach, we partially determine the parameters of optimal quaternary additive 3.5-dimensional codes of lengths from $28$ to $254$.
	
	Consider the finite field of order $2$ by $\mathbb{F}_2$ and the finite field of order $4$ by $\mathbb{F}_4 = \{0, 1, w, w^2\}$, where $w^2 +w + 1 = 0$. In addition, $\mathbf{1_n}$, $\mathbf{w_n}$, $\mathbf{w^2_n}$ denote all $1$, $w$ and $w^2$ vectors of length $n$, respectively. 
	
	Before starting construction, we introduce additive codes' basic derivation and augmentation.
	\begin{lemma}
		If $\mathscr{C}_a$ is an additive code of parameters $(n,k,d)_{q^2}$, then the following additive codes also exist:\par
		(1) For $i\ge 1$, $(n+i,k,\ge d)_{q^2}$ (Additive Extend); \par
		(2) For $i\le d$, $(n-i,k,\ge d-i)_{q^2}$ (Additive Puncture); \par
		(3) For $i \le k$, $(n-i,k-i,\ge d)_{q^2}$ (Additive Shorten); 
	\end{lemma}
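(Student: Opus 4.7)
The overall strategy is to exploit the fact that an additive $(n,k,d)_{q^2}$ code is, by definition, an $\mathbb{F}_q$-linear subspace of $\mathbb{F}_{q^2}^n$ of $\mathbb{F}_q$-dimension $2k$. With that in hand, the three items are precisely the additive versions of the standard extend/puncture/shorten operations for linear codes, and each one becomes a simple column operation on a fixed $\mathbb{F}_q$-generator matrix $G\in\mathbb{F}_{q^2}^{2k\times n}$ of $\mathscr{C}_a$.

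For (1) I would append $i$ zero columns to $G$; the new row space is still $\mathbb{F}_q$-linear of dimension $2k$ and every Hamming weight is unchanged, giving $(n+i,k,\ge d)_{q^2}$. For (2) I would delete any $i$ columns of $G$ and consider the induced $\mathbb{F}_q$-linear projection $\pi:\mathscr{C}_a\to\mathbb{F}_{q^2}^{n-i}$; any nonzero element of $\ker\pi|_{\mathscr{C}_a}$ would be supported only on the $i$ deleted coordinates and hence have Hamming weight at most $i<d$, contradicting the minimum distance, so $\pi$ is injective, the $\mathbb{F}_q$-dimension is preserved, and each surviving codeword loses at most $i$ coordinates, giving weight at least $d-i$ (the borderline case $i=d$ yields the trivially true bound $\ge 0$). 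For (3) I would fix $i$ coordinates $j_1,\dots,j_i$ and pass to the subcode $\mathscr{C}_a'=\{\vec c\in\mathscr{C}_a: c_{j_r}=0,\ 1\le r\le i\}$; since each condition $c_{j_r}=0$ imposes at most two $\mathbb{F}_q$-linear constraints, $\dim_{\mathbb{F}_q}\mathscr{C}_a'\ge 2k-2i$, and choosing the coordinates so that these $2i$ evaluation forms are $\mathbb{F}_q$-independent on $\mathscr{C}_a$ (for example by putting $G$ into a systematic form on those positions) makes the drop exact. Deleting the now-zero columns leaves both weights and $\mathbb{F}_q$-dimension unchanged, producing $(n-i,k-i,\ge d)_{q^2}$.

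The argument is essentially bookkeeping and there is no real obstacle. The one point worth attention is that additive codes may carry half-integer $\mathbb{F}_{q^2}$-dimension $k$, so all counting should be performed over $\mathbb{F}_q$, where a drop of one $\mathbb{F}_{q^2}$-unit corresponds to a drop of two $\mathbb{F}_q$-units; once that is handled consistently, each of the three constructions reduces to the standard linear-code argument transposed verbatim to the $\mathbb{F}_q$-linear setting.
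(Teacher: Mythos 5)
Your proposal is correct in substance, and parts (1) and (2) match the paper's (much terser) argument, which simply observes that an additive code's generator matrix is an ordinary matrix with $2k$ rows over $\mathbb{F}_q$ so that extension and puncturing carry over verbatim from the linear case. For part (3), however, you take a genuinely different route: you shorten directly, viewing each condition $c_{j_r}=0$ as at most two $\mathbb{F}_q$-linear constraints and bounding the dimension drop, whereas the paper argues via duality, invoking the additive analogue of Theorem 1.5.7(i) of Huffman--Pless to identify the shortening of $\mathscr{C}_a$ with the puncturing of $\mathscr{C}_a^{\perp}$ and then reading off the parameters from the dual. Your direct argument is more self-contained and avoids having to re-verify the puncture/shorten duality in the trace-inner-product setting; the paper's route is shorter on the page but leans on an external theorem. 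One small imprecision in your version: the claim that you can always choose coordinates so that the $2i$ evaluation forms are independent (``systematic form'') can fail for degenerate additive codes, e.g.\ one contained in $\mathbb{F}_q^n\subset\mathbb{F}_{q^2}^n$, where each coordinate contributes only one independent form. This does not sink the proof --- you still get $\dim_{\mathbb{F}_q}\mathscr{C}_a'\ge 2k-2i$, and passing to any $\mathbb{F}_q$-subspace of dimension exactly $2(k-i)$ yields the claimed $(n-i,k-i,\ge d)_{q^2}$ code, since subcodes inherit the distance bound --- but the statement should be phrased as an existence claim via subcodes rather than an exact dimension count. (Both you and the paper treat the boundary case $i=d$ of (2) loosely, since there the projection need not be injective; this is an artifact of the lemma's statement rather than of either proof.)
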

	\begin{proof}
		The difference between generator matrix of additive and linear codes is that the number of rows of additive codes is $2k$. For the reason that puncture and extension can be considered as the extensions and deletions of the codewords. Therefore, the extent and puncture of additive and linear codes have similar properties, so (1) and (2) naturally hold.
		Since the proof of Theorem 1.5.7 (i) in \cite{huffman2010fundamentals} is also valid for additive codes, the puncture of $\mathscr{C}_a$ is equivalent to the shorten of $\mathscr{C}_a^{\perp}$.
		By puncturing $\mathscr{C}_a^{\perp}$, one can get an additive code $(n-1,k)_{q^2}$, which satisfies that at least any $d-1$ columns are linearly independent. Hence, there exist $(n-i,k-i,\ge d)_{q^2}$ additive codes. Therefore, (3) also holds.
	\end{proof}
	%

	\begin{lemma}(Additive Augmentation) \label{Augment}
		If $\mathscr{C}_a$ is an additive code of parameters $(n,k)_{q^2}$, and no codeword of weight $n$ in $\mathscr{C}_a$, then the following additive codes exist.\par
		(1) $(n,k+0.5)_{q^2}$.\par
		(2) $(n,k+1)_{q^2}$.
	\end{lemma}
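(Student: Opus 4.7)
The plan is to establish both parts by a direct augmentation argument: adjoin well-chosen constant vectors to an $\mathbb{F}_q$-generator matrix of $\mathscr{C}_a$ and verify that the $\mathbb{F}_q$-dimension grows by the expected amount. Recall that the notation $(n,k)_{q^2}$ encodes an $\mathbb{F}_q$-dimension of $2k$, so producing an extra $\mathbb{F}_{q^2}$-dimension of $0.5$ (resp.\ $1$) amounts to appending one (resp.\ two) linearly independent rows over $\mathbb{F}_q$ to the existing generator matrix.

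For part (1), I would form $\mathscr{C}_a' := \mathscr{C}_a + \mathbb{F}_q \mathbf{1_n}$. Every non-zero element of $\mathbb{F}_q \mathbf{1_n}$ is $\alpha \mathbf{1_n}$ with $\alpha \in \mathbb{F}_q^{*}$ and therefore has weight exactly $n$; by hypothesis $\mathscr{C}_a$ contains no such vector, so $\mathscr{C}_a \cap \mathbb{F}_q \mathbf{1_n} = \{\mathbf{0}\}$, hence $\dim_{\mathbb{F}_q}\mathscr{C}_a' = 2k+1$, giving an $(n, k+0.5)_{q^2}$ additive code. For part (2), I would instead form $\mathscr{C}_a'' := \mathscr{C}_a + \mathbb{F}_q \mathbf{1_n} + \mathbb{F}_q \mathbf{w_n}$ and use the key observation that any non-zero $\mathbb{F}_q$-linear combination of $\mathbf{1_n}$ and $\mathbf{w_n}$ equals $(\alpha + \beta w)\mathbf{1_n}$ for some $\alpha + \beta w \in \mathbb{F}_{q^2}^{*}$, which again has weight exactly $n$. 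The same weight-$n$ avoidance hypothesis then forces $\mathscr{C}_a \cap (\mathbb{F}_q \mathbf{1_n} + \mathbb{F}_q \mathbf{w_n}) = \{\mathbf{0}\}$, so $\dim_{\mathbb{F}_q}\mathscr{C}_a'' = 2k+2$ and we obtain an $(n, k+1)_{q^2}$ code.

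There is no real obstacle — the whole argument reduces to a one-line dimension count once the correct augmenting vectors are pinned down. The only point that merits care is that the $\mathbb{F}_q$-span of $\{\mathbf{1_n}, \mathbf{w_n}\}$ coincides with the one-dimensional $\mathbb{F}_{q^2}$-subspace $\mathbb{F}_{q^2}\mathbf{1_n}$, whose non-zero members are exactly the constant (weight-$n$) vectors; it is precisely this coincidence that makes the single assumption ``no codeword of weight $n$'' strong enough to support the larger augmentation in~(2) as well as the smaller one in~(1).
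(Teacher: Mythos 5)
Your proposal is correct and follows essentially the same route as the paper: augment $\mathscr{C}_a$ by the constant vectors $\mathbf{1_n}$ (and $\mathbf{w_n}$) and use the no-weight-$n$-codeword hypothesis to guarantee the dimension grows. Your write-up is in fact slightly more careful than the paper's, since you explicitly verify that \emph{every} nonzero $\mathbb{F}_q$-combination of $\mathbf{1_n}$ and $\mathbf{w_n}$ is a constant vector of weight $n$, which is exactly what makes the sum direct and the $\mathbb{F}_q$-dimension count $2k+1$ and $2k+2$ go through.
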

	\begin{proof}
		For $\mathbb{F}_{q^2}^n$, there are three special vectors of weight $n$, $\mathbf{1_n}$, $\mathbf{w_n}$, $\mathbf{w^2_n}$, which can be expressed in two bases $\mathbf{1_n}$ and $\mathbf{w_n}$. 
		Since, there is no codeword in $\mathscr{C}_a$ of weight $n$, then $\mathbf{1_n}$ and $\mathbf{w_n}$ both cannot be spanned by bases of $\mathscr{C}_a$. 
		Therefore, adding $\mathbf{1_n}$, one can get an $(n,k+0.5)_{q^2}$ additive code; adding $\mathbf{1_n}$ and $\mathbf{w_n}$ to $\mathscr{C}_a$, one can get an $(n,k+1)_{q^2}$ additive code. 
	\end{proof}
	
	\begin{example}
		Let $q=2$, $n=63$. Taking $g(x)=x^{53} + x^{52} + x^{51} + x^{50} + x^{48} + x^{47} + x^{45} + x^{43} + x^{42} + x^{40} + x^{39} +x^{38} + x^{31} + x^{28} + x^{25} + x^{24} + x^{21} + x^{20} + x^{19} + x^{17} + x^{14} + x^{13} + x^9 + x^8 + x^5 + x + 1$, $\langle g(x) \rangle $ can generator a $[63,10,27]_2$ cyclic code, selecting
		$f_0(x)=x^{61} + x^{59} + x^{58} + x^{54} + x^{52} + x^{50} + x^{45} + x^{44} + x^{43} + x^{41} + x^{39} + x^{33} + x^{32} + x^{31} + x^{26} + x^{24} + x^{23} + x^{22} + x^{20} + x^{18} + x^{13} + x^{12} + x^{11} + x^{10} + x^9 + x^6 + x^4 + x^2 + x$,
		and $f_1(x)=1$. Since $gcd(f_0(x)+ f_{1}(x),\frac{x^n-1}{g(x)})=1$, and $deg(f_0(x)f_{1}(x))\ge 1$, $([g(x)f_0(x)],[g(x)])$ will generate a symplectic $[126,10,\ge 41]_2^s$ code $\mathscr{C}_s$.  
		Using Magma \cite{bosma1997magma}, the real distance of $\mathscr{C}_a=\Phi(\mathscr{C}_s)$ can be calculated as $45$. Therefore, we can get an additive code of parameters $(63,5,45)_4$, which have a larger minimum distance compared with best-known linear code $[63,5,44]_4$ in \cite{Grassltable}.  
		
		In addition, augment $(63,5,45)_4$, we can get $(63,5.5,45)_4$; extend $(63,5,45)_4$, we can get $(64,5,46)_4$. Extend $(63,5.5,45)_4$, we can get $(64,5.5,46)_4$. Best-known linear codes in \cite{Grassltable} are $[63,5,44]_4$, $[64,5,45]_4$, so $(63,5.5,45)_4$, $(64,5,46)_4$ and $(64,5.5,46)_4$ all outperform best-known linear counterparts.
	\end{example}

	\begin{lemma}(Additive Construction X)\label{construction_X}
		If there are two additive codes $\mathscr{C}_{a2} \subset \mathscr{C}_{a1}$, with parameters $(n,k_2,d_2)_{q^2} \subset (n,k_1,d_1)_{q^2}$, where $d_2 > d_1$. Let $\mathscr{C}_{a3}$ be an additive code with parameters $(l,k_1-k_2,\delta )_{q^2}$, then there exists $(n+l,k_1,\min\{\delta+d_1,d_2\} )_{q^2}$ additive code.
	\end{lemma}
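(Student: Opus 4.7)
The plan is to mimic the classical Construction X for linear codes, replacing the quotient-space / linear-complement picture with one that uses only the additive-group structure. Since $\mathscr{C}_{a2}\subset\mathscr{C}_{a1}$ as abelian groups, the quotient $\mathscr{C}_{a1}/\mathscr{C}_{a2}$ is an $\mathbb{F}_q$-vector space of dimension $2(k_1-k_2)$, and $\mathscr{C}_{a3}$ is also an $\mathbb{F}_q$-vector space of dimension $2(k_1-k_2)$ (using the paper's convention that dimension $k$ corresponds to $q^{2k}$ codewords). Hence there exists an $\mathbb{F}_q$-linear isomorphism $\psi:\mathscr{C}_{a1}/\mathscr{C}_{a2}\to\mathscr{C}_{a3}$; this existence is the only structural input needed from $\mathscr{C}_{a3}$, and it is guaranteed automatically.

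With $\psi$ in hand, I would define
\[
\mathscr{C}_a \;=\; \bigl\{\,(\vec c_1,\;\psi(\vec c_1+\mathscr{C}_{a2}))\;:\;\vec c_1\in\mathscr{C}_{a1}\bigr\}\;\subset\;\mathbb{F}_{q^2}^{\,n+l}.
\]
The verification then proceeds in three short steps. First, additivity: because $\psi$ is a group homomorphism and $\vec c_1\mapsto \vec c_1+\mathscr{C}_{a2}$ is also a group homomorphism, $\mathscr{C}_a$ is closed under coordinate-wise addition. Second, parameters of length and dimension: the length is clearly $n+l$, and the map $\vec c_1\mapsto(\vec c_1,\psi(\vec c_1+\mathscr{C}_{a2}))$ is injective (its first coordinate already recovers $\vec c_1$), so $|\mathscr{C}_a|=|\mathscr{C}_{a1}|=q^{2k_1}$, giving dimension $k_1$ in the paper's notation.

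Third, the distance bound splits into two cases for any nonzero codeword $(\vec c_1,\psi(\vec c_1+\mathscr{C}_{a2}))$. If $\vec c_1\in\mathscr{C}_{a2}$ then the right block is $\psi(0)=\vec 0$, so the Hamming weight equals $\mathrm{w}_H(\vec c_1)\ge d_2$ (using that $\vec c_1\neq\vec 0$ in this case). If $\vec c_1\notin\mathscr{C}_{a2}$ then $\psi(\vec c_1+\mathscr{C}_{a2})$ is a nonzero element of $\mathscr{C}_{a3}$, whose weight is at least $\delta$, while $\vec c_1\in\mathscr{C}_{a1}\setminus\{\vec 0\}$ contributes at least $d_1$; the total weight is therefore at least $d_1+\delta$. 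Taking the minimum over the two cases yields $d(\mathscr{C}_a)\ge\min\{d_2,\,d_1+\delta\}$, which is exactly the claimed bound.

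I do not anticipate any real obstacle; the only point one must be slightly careful about is that $\mathscr{C}_{a3}$ is merely additive, not $\mathbb{F}_{q^2}$-linear, so I cannot appeal to a linear complement of $\mathscr{C}_{a2}$ inside $\mathscr{C}_{a1}$. Passing through the quotient $\mathscr{C}_{a1}/\mathscr{C}_{a2}$ circumvents this and uses only the $\mathbb{F}_q$-vector-space structure, which is exactly the level at which additivity is preserved. Thus the construction is well-defined and the parameter claim follows from the case analysis above.
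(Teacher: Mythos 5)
Your construction is correct and is essentially the paper's own: the paper builds the same code via the stacked generator matrix $\begin{pmatrix} G_{ax} & G_{a3} \\ G_{a2} & 0 \end{pmatrix}$, where matching the rows of $G_{ax}$ (a basis of a complement of $\mathscr{C}_{a2}$ in $\mathscr{C}_{a1}$) with the rows of $G_{a3}$ is precisely a choice of your isomorphism $\psi:\mathscr{C}_{a1}/\mathscr{C}_{a2}\to\mathscr{C}_{a3}$. Your two-case weight analysis supplies the distance verification that the paper's one-line proof leaves implicit, so nothing further is needed.
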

	\begin{proof}
		First, suppose the generator matrices of $\mathscr{C}_{a1}$, $\mathscr{C}_{a2}$ and $\mathscr{C}_{a3}$ as $G_{a1}$, $G_{a2}$ and $G_{a3}$, respectively. 
		Since, $\mathscr{C}_{a2} \subset \mathscr{C}_{a1}$, there is $G_{a1}=\begin{pmatrix}
			G_{a2} \\
			G_{ax}
		\end{pmatrix}$.
		Then, constructing $Gx=\begin{pmatrix}
			G_{ax}	& G_{a3} \\
			G_{a2}	&0
		\end{pmatrix}$, which can generate an additive code with parameters $(n+l,k_1,\min\{\delta+d_1,d_2\} )_{q^2}$.
	\end{proof}
	\begin{example}
		Let $q=2$, $n=35$. Taking $g(x)=x^{28}+ x^{25}+ x^{24}+ x^{21}+ x^{20}+ x^{19}+ x^{18}+ x^{17}+ x^{15}+ x^{13}+ x^{12}+ x^{11}+ x^{9}+ x^{8}+ x^{6}+ x^{2}+ x+ 1$, $f_0(x)=x^{34}+ x^{33}+ x^{30}+ x^{28}+ x^{27}+ x^{25}+ x^{23}+ x^{22}+ x^{18}+ x^{17}+ x^{16}+ x^{11}+ x^{8}+ x^{7}+ x^{6}+ x^{5}+ x^{4}$, $f_1(x)=x^{34}+ x^{27}+ x^{23}+ x^{21}+ x^{20}+ x^{19}+ x^{18}+ x^{16}+ x^{15}+ x^{13}+ x^{11}+ x^{10}+ x^{5}+ x^{4}+ x$.
		Then, $([g(x)f_0(x)],[g(x)f_1(x)])$ will generate a symplectic $[70,7,26]_2^s$ code, which corresponds to an additive code of parameters $(35,3.5,26)_4$. Further, we can get a subcode with parameters $(35,1.5,30)_4$ by taking out codewords of weight $30$ from $(35,3.5,26)_4$. Generator matrix of additive $(35,1.5,30)_4$ code is $G_{sub}$. 
		
		\begin{figure*}[htbp] 
			\centering
			\begin{equation*}
				\setlength{\arraycolsep}{0.1pt}
				G_{sub}= \left( {\begin{array}{*{35}{c}}
						1 & w & 1 & w^2 & w^2 & w   & 0   & 1 & w & 1 & w^2 & w^2 & w   & 0   & 1 & w & 1 & w^2 & w^2 & w   & 0   & 1 & w & 1 & w^2 & w^2 & w   & 0   & 1 & w & 1 & w^2 & w^2 & w   & 0   \\
						w & 0 & 1 & w   & 1   & w^2 & w^2 & w & 0 & 1 & w   & 1   & w^2 & w^2 & w & 0 & 1 & w   & 1   & w^2 & w^2 & w & 0 & 1 & w   & 1   & w^2 & w^2 & w & 0 & 1 & w   & 1   & w^2 & w^2 \\
						0 & 1 & w & 1   & w^2 & w^2 & w   & 0 & 1 & w & 1   & w^2 & w^2 & w   & 0 & 1 & w & 1   & w^2 & w^2 & w   & 0 & 1 & w & 1   & w^2 & w^2 & w   & 0 & 1 & w & 1   & w^2 & w^2 & w
				\end{array}} \right).
			\end{equation*}
		\end{figure*}
		
		According to Lemma \ref{construction_X}, select $(5,2,4)_4$ as auxiliary code which leads to an optimal additive code with parameters $(40,3.5,30)_4$. 
		This code performs better than optimal linear code $[40,3,30]_4$ in \cite{Grassltable}.  
	\end{example}

	It should be noted that Theorem \ref{one_quasi-cyclic} also has implications for constructing good additive codes via the symplectic multi-generator quasi-cyclic codes.
	
	\begin{example}
		Let $q=2$, $n=47$. Selecting $g_1(x)=x^{23} + x^{19} + x^{18} + x^{14} + x^{13} + x^{12} + x^{10} + x^9 + x^7 + x^6 + x^5 + x^3 + x^2 + x + 1$ and $g_2(x)=x + 1$, which can generate two optimal binary cyclic codes with parameters $[47,24,11]_2$ and $[47,46,2]_2$, respectively; 
		By computer search, we find $f(x)=x^{45} + x^{44} + x^{43} + x^{42} + x^{41} + x^{39} + x^{38} + x^{37} + x^{33} + x^{32} + x^{15} + x^{14} + x^{10} + x^9 + x^8 + x^6 + x^5 + x^4 +  x^3 + x^2$  such that the parameters of the 2-generator quasi-cyclic code with $([g_1(x)f(x),g_1(x)])$ and $([g_2(x),g_2(x)f(x)])$ as generators are $(47,35,7)_4$. This result surpasses best-known quaternary linear code $[47,35,6]_4$ in Grassl's code table \cite{Grassltable}.
	\end{example}
	\begin{example}
		Let $q=2$, $n=55$. Selecting $g_1(x)=x^{35} + x^{34} + x^{32} + x^{31} + x^{30} + x^{28} + x^{27} + x^{26} + x^{24} + x^{23} + x^{20} +x^{18} + x^{14} + x^{13} + x^{11} + x^{10} + x^{8} + x^{7} + x^{6} + x^{5} + x^{4} + x^{3} + x^{2} + 1$ and $g_2(x)=\frac{x^n-1}{x+1}$, which can generate two optimal binary cyclic codes with parameters $[55,20,16]_2$ and $[55,1,55]_2$, respectively. 
		By computer search, we find $f_0(x)=x^{50} + x^{49} + x^{47} + x^{43} + x^{42} + x^{41} + x^{39} + x^{38} + x^{37} + x^{36} + x^{35} + x^{34} + x^{33} + x^{31} + x^{29} + x^{25} + x^{24} + x^{21} + x^{20} + x^{16} + x^{15} + x^{13} + x^{12} + x^{10} + x^{9} + x^{8} + x^{5} + x^{2} + x$;
		$f_1(x)=x^{53} + x^{52} + x^{51} + x^{50} + x^{49} + x^{47} + x^{46} + x^{45} + x^{44} + x^{40} + x^{39} +x^{37} + x^{36} + x^{33} + x^{30} + x^{29} + x^{27} + x^{26} + x^{25} + x^{24} + x^{22} + x^{21} + x^{20} + x^{18} + x^{17} + x^{16} + x^{15} + x^{14} + x^{13} + x^{9} + x^{7} + x^{5} + x^4$;  such that the parameters of the 2-generator quasi-cyclic code with $([g_1(x)f_0(x),g_1(x)f_1(x)])$ and $([g_2(x)f_1(x),g_2(x)f_0(x)])$ as generators are $(55,10.5,29)_4$. 
		
		Extending $(55,10.5,29)_4$, we can get $(56,10.5,30)_4$, which fill the gap of $[56,10,32]_4$ and $[56,11,29]_4$. In addition, augmenting $(56,10.5,30)_4$, one can get an additive code have parameters $(56,11,30)_4$.
		This result is surpassing best-known quaternary linear code $[56,11,29]_4$ in Grassl's code table \cite{Grassltable}.
	\end{example}

	%
	%
	
	\begin{remark}
		According to Lemma \ref{Augment}, by augmenting the additive codes $(31,2.5,24)_4$ and $(127,3.5,96)_4$ in Example 1 and Example 2 twice, we can obtain $(31,3.5,23)_4$ and $(127,4.5,95)_4$.
		Further, extending $(31,3.5,23)_4$ and $(127,4.5,95)_4$, we can also get $(32,3.5,24)_4$ and $(128,4.5,96)_4$. 
		In particular, $(31,3.5,23)_4$, $(32,3.5,24)_4$, $(127,4.5,95)_4$ and $(128,4.5,96)_4$ all perform better than the best linear codes $[31,3,23]_4$, $[32,3,24]_4$, $[127,4,95]_4$ and $[128,4,96]_4$ in \cite{Grassltable}.    
		
		In addition, due to the extension of additive codes obeying even-like principle, so there exists $(32,1,32)_4\subset (32,3.5,24)_4$, selecting $(11,2.5,8)_4$ as auxiliary code $\mathscr{C}_{aux}$. Generator matrix $\mathscr{C}_{aux}$ is given here.
		\setlength{\arraycolsep}{0.5pt}
		$$G_{aux}= \left( {\begin{array}{*{11}{c}}
				1&0&w&1&1&w&0&w&1&w&0\\
				w&0&w&1&w&0&w&1&w&0&w\\
				0&1&0&1&w^2&0&w^2&1&w^2&w^2&1\\
				0&w&w&0&1&w^2&w^2&1&0&w&w\\
				0&0&w^2&w&1&1&w&0&w&w&w^2
		\end{array}} \right).$$

		Then, with Lemma \ref{construction_X}, we can get an additive $(43,3.5,32)_4$ code. Similarly, there also has $(128,1,128)_4$$\subset$ $(128,4.5,96)_4$, so separately selecting additive codes $(32,3.5,24)_4$, $(35,3.5,26)_4$, $(40,3.5,30)_4$ and $(43,3.5,32)_4$ as auxiliary codes $\mathscr{C}_{aux}^{\prime}$; then, with Lemma \ref{construction_X}, we can get optimal additive codes with parameters $(160,4.5,120)_4$, $(163,4.5,122)_4$, $(168,4.5,126)_4$, $(171,4.5,128)_4$.  By virtue of Lemma \ref{Jopt}, $(43,3.5,32)_4$, $(160,4.5,120)_4$, $(163,4.5,122)_4$, $(168,4.5,126)_4$, $(171,4.5,128)_4$ are all optimal additive codes and have better performance than optimal linear codes in \cite{Grassltable}.
	\end{remark}
	
	\begin{remark} \label{reviewer}
		We would like to thank the reviewer for the professional reminder that the extremal code $[85, 4, 64]_4$ in $P G(3,4)$ reveals the existence of $(85,3.5,64)_4$.
	\end{remark}
	
	A combination method can obtain more optimal $3.5$-dimensional additive codes ranging from $28$ to $254$,  as shown in Table 1.
	For clarity of presentation, we make $\mathscr{C}_t$ denote quaternary additive $(n,3.5,n-t)_4$ code. $(\mathscr{C}_{t_1}\mid\mathscr{C}_{t_2})$ denotes the juxtaposition code of $\mathscr{C}_{t_1}$ and $\mathscr{C}_{t_2}$. Moreover, when $\mathscr{C}_{t_1}$ and $\mathscr{C}_{t_2}$ are combined, we default them both to the maximum length. The fourth column of Table \ref{3.5_additive} shows the range of the optimal additive codes derived with Lemma \ref{Jopt}. In addition, most perform better than the optimal 3-dimensional linear codes in \cite{Grassltable}, but we will not list them here due to space limitations.

	\begin{table}[htbp]
		\caption{Optimal quaternary additive 3.5-dimensional codes of lengths form $28$ to $254$}\label{3.5_additive}
		\centering
		\begin{tabular}{ccccc}
			\toprule
			No. & $\mathscr{C}_{t}$  &    Parameters    &       Range        &               Constructions                \\ \midrule
			1  & $\mathscr{C}_{8}$  & $(n,3.5,n-8)_4$  &  $28\le n \le 32$  &                 Remark 2                  \\
			2  & $\mathscr{C}_{9}$  & $(n,3.5,n-9)_4$  &  $33\le n \le 35$  &                 Example 4                 \\
			3  & $\mathscr{C}_{10}$ & $(n,3.5,n-10)_4$ &  $36\le n \le 40$  &                 Example 4                 \\
			4  & $\mathscr{C}_{11}$ & $(n,3.5,n-11)_4$ &  $41\le n \le 43$  &                 Remark 2                  \\
			5  & $\mathscr{C}_{16}$ & $(n,3.5,n-16)_4$ &  $60\le n \le 64$  &  $(\mathscr{C}_{8}\mid\mathscr{C}_{8})$   \\
			6  & $\mathscr{C}_{17}$ & $(n,3.5,n-17)_4$ &  $65\le n \le 67$  &  $(\mathscr{C}_{8}\mid\mathscr{C}_{9})$   \\
			7  & $\mathscr{C}_{18}$ & $(n,3.5,n-18)_4$ &  $68\le n \le 72$  &  $(\mathscr{C}_{8}\mid\mathscr{C}_{10})$   \\
			8  & $\mathscr{C}_{19}$ & $(n,3.5,n-19)_4$ &  $73\le n \le 75$  &  $(\mathscr{C}_{9}\mid\mathscr{C}_{10})$  \\
			9  & $\mathscr{C}_{20}$ & $(n,3.5,n-20)_4$ &  $76\le n \le 80$  & $(\mathscr{C}_{10}\mid\mathscr{C}_{10})$  \\
			10  & $\mathscr{C}_{21}$ & $(n,3.5,n-21)_4$ &  $81\le n \le 85$  & Remark \ref{reviewer}  \\
			11  & $\mathscr{C}_{22}$ & $(n,3.5,n-22)_4$ &  $86\le n \le 86$  & $(\mathscr{C}_{11}\mid\mathscr{C}_{11})$  \\
			12  & $\mathscr{C}_{24}$ & $(n,3.5,n-24)_4$ &  $92\le n \le 96$  &  $(\mathscr{C}_{8}\mid\mathscr{C}_{16})$  \\
			13  & $\mathscr{C}_{25}$ & $(n,3.5,n-25)_4$ &  $97\le n \le 99$  &  $(\mathscr{C}_{9}\mid\mathscr{C}_{16})$  \\
			14  & $\mathscr{C}_{26}$ & $(n,3.5,n-26)_4$ & $102\le n \le 104$ & $(\mathscr{C}_{10}\mid\mathscr{C}_{16})$  \\
			15  & $\mathscr{C}_{27}$ & $(n,3.5,n-27)_4$ & $107\le n \le 107$ &  $(\mathscr{C}_{8}\mid\mathscr{C}_{19})$  \\
			16  & $\mathscr{C}_{28}$ & $(n,3.5,n-28)_4$ & $108\le n \le 112$ &  $(\mathscr{C}_{8}\mid\mathscr{C}_{20})$  \\
			17  & $\mathscr{C}_{29}$ & $(n,3.5,n-29)_4$ & $113\le n \le 115$ &  $(\mathscr{C}_{9}\mid\mathscr{C}_{20})$  \\
			18  & $\mathscr{C}_{30}$ & $(n,3.5,n-30)_4$ & $118\le n \le 120$ & $(\mathscr{C}_{10}\mid\mathscr{C}_{20})$  \\
			19  & $\mathscr{C}_{31}$ & $(n,3.5,n-31)_4$ & $123\le n \le 127$ &                 Example 2                 \\
			20  & $\mathscr{C}_{32}$ & $(n,3.5,n-32)_4$ & $128\le n \le 128$ &         Extend $\mathscr{C}_{31}$         \\
			21  & $\mathscr{C}_{34}$ & $(n,3.5,n-34)_4$ & $134\le n \le 136$ & $(\mathscr{C}_{10}\mid\mathscr{C}_{24})$  \\
			22  & $\mathscr{C}_{39}$ & $(n,3.5,n-39)_4$ & $155\le n \le 159$ & $(\mathscr{C}_{8}\mid\mathscr{C}_{31})$  \\
			23  & $\mathscr{C}_{40}$ & $(n,3.5,n-40)_4$ & $160\le n \le 162$ & $(\mathscr{C}_{9}\mid\mathscr{C}_{31})$  \\
			24  & $\mathscr{C}_{41}$ & $(n,3.5,n-41)_4$ & $163\le n \le 167$ & $(\mathscr{C}_{10}\mid\mathscr{C}_{31})$ \\
			25  & $\mathscr{C}_{42}$ & $(n,3.5,n-42)_4$ & $168\le n \le 170$ & $(\mathscr{C}_{11}\mid\mathscr{C}_{31})$ \\
			26  & $\mathscr{C}_{47}$ & $(n,3.5,n-47)_4$ & $187\le n \le 191$ & $(\mathscr{C}_{16}\mid\mathscr{C}_{31})$ \\
			27  & $\mathscr{C}_{48}$ & $(n,3.5,n-48)_4$ & $192\le n \le 194$ & $(\mathscr{C}_{17}\mid\mathscr{C}_{31})$ \\
			28  & $\mathscr{C}_{49}$ & $(n,3.5,n-49)_4$ & $195\le n \le 199$ & $(\mathscr{C}_{18}\mid\mathscr{C}_{31})$ \\
			29  & $\mathscr{C}_{50}$ & $(n,3.5,n-50)_4$ & $200\le n \le 202$ & $(\mathscr{C}_{19}\mid\mathscr{C}_{31})$ \\
			30  & $\mathscr{C}_{51}$ & $(n,3.5,n-51)_4$ & $203\le n \le 207$ & $(\mathscr{C}_{20}\mid\mathscr{C}_{31})$ \\
			31  & $\mathscr{C}_{52}$ & $(n,3.5,n-52)_4$ & $208\le n \le 212$ & $(\mathscr{C}_{21}\mid\mathscr{C}_{31})$ \\
			32  & $\mathscr{C}_{53}$ & $(n,3.5,n-53)_4$ & $213\le n \le 213$ & $(\mathscr{C}_{22}\mid\mathscr{C}_{31})$ \\
			33  & $\mathscr{C}_{55}$ & $(n,3.5,n-55)_4$ & $219\le n \le 223$ & $(\mathscr{C}_{24}\mid\mathscr{C}_{31})$ \\
			34  & $\mathscr{C}_{56}$ & $(n,3.5,n-56)_4$ & $224\le n \le 226$ & $(\mathscr{C}_{25}\mid\mathscr{C}_{31})$ \\
			35  & $\mathscr{C}_{57}$ & $(n,3.5,n-57)_4$ & $229\le n \le 231$ & $(\mathscr{C}_{26}\mid\mathscr{C}_{31})$ \\
			36  & $\mathscr{C}_{58}$ & $(n,3.5,n-58)_4$ & $234\le n \le 234$ & $(\mathscr{C}_{27}\mid\mathscr{C}_{31})$ \\
			37  & $\mathscr{C}_{59}$ & $(n,3.5,n-59)_4$ & $235\le n \le 239$ & $(\mathscr{C}_{28}\mid\mathscr{C}_{31})$ \\
			38  & $\mathscr{C}_{60}$ & $(n,3.5,n-60)_4$ & $240\le n \le 242$ & $(\mathscr{C}_{29}\mid\mathscr{C}_{31})$ \\
			39  & $\mathscr{C}_{61}$ & $(n,3.5,n-61)_4$ & $245\le n \le 247$ & $(\mathscr{C}_{30}\mid\mathscr{C}_{31})$ \\
			40  & $\mathscr{C}_{62}$ & $(n,3.5,n-62)_4$ & $250\le n \le 254$ & $(\mathscr{C}_{31}\mid\mathscr{C}_{31})$ \\ \bottomrule
		\end{tabular}%

	\end{table}
	
	Furthermore, we also construct a number of good additive codes, all of which are better than their linear counterparts in \cite{Grassltable}. We give their specific constructions in Table \ref{good_QC} in Appendix and compare their parameters with best-known linear codes in \cite{Grassltable} in Tables \ref{good_additive} and \ref{higher_rate_additive} to illustrate the effectiveness of the methods in this paper. In particular, Table \ref{good_additive} is a comparison done for the same length $n$ and dimension $k$. Our additive codes have a greater distance. Table  \ref{higher_rate_additive} lists additive codes with higher information rates that are compared with best-known linear codes, i.e., they have twice as many codewords as the corresponding linear codes for the same code length $n$ and distance $d$.

	\begin{table}[htbp]
		\caption{Quaternary Additive codes outperform linear counterparts}\label{good_additive}
		\centering
		\begin{tabular}{ccc}
			\toprule
			No. & Our Additive Codes & Linear Counterparts in \cite{Grassltable} \\ \midrule
			1  &   $(47,35,7)_4$    &               $[47,35,6]_4$               \\
			2  &    $(56,11,30)$    &              $[56,11,29]_4$               \\
			3  &   $(64,5,46)_4$    &               $[64,5,45]_4$               \\
			4  &   $(92,7,64)_4$    &               $[92,7,62]_4$               \\
			5  &  $(128,11,80)_4$   &              $[128,11,79]_4$              \\
			6  &  $(196,7,140)_4$   &              $[196,7,139]_4$              \\ \bottomrule
		\end{tabular}%
	\end{table}
	
	\begin{table}[htbp]
		\centering
		
		\caption{Quaternary Additive codes have higher information rates than linear counterparts}\label{higher_rate_additive}
		\begin{threeparttable}
			
			\begin{tabular}{cccc}
				\toprule
				No. &        $(n,k,d)_4$         & $[n,\lfloor k \rfloor,d]_4$ & $[n,\lceil k \rceil,d]_4$ \\ \midrule
				1  &      $(21,10.5,8)_4$       &        $[21,10,8]_4$        &       $[21,11,7]_4$       \\
				2  &      $(22,12.5,7)_4$       &        $[22,12,7]_4$        &       $[22,13,6]_4$       \\
				3  &      $(22,15.5,5)_4$       &        $[22,15,5]_4$        &       $[22,16,4]_4$       \\
				4  &      $(30,6.5,18)_4$       &        $[30,6,18]_4$        &       $[30,7,17]_4$       \\
				5  &      $(35,9.5,18)_4$       &        $[35,9,18]_4$        &      $[35,10,17]_4$       \\
				6  &      $(42,24.5,10)_4$      &       $[42,24,10]_4$        &       $[42,25,9]_4$       \\
				7  &      $(45,14.5,19)_4$      &       $[45,14,19]_4$        &      $[45,15,18]_4$       \\
				8  &      $(44,13.5,19)_4$      &       $[44,13,19]_4$        &      $[44,14,18]_4$       \\
				9  &      $(45,15.5,18)_4$      &       $[45,15,18]_4$        &      $[45,16,17]_4$       \\
				10  &      $(45,16.5,17)_4$      &       $[45,16,17]_4$        &      $[45,17,16]_4$       \\
				11  &      $(47,11.5,24)_4$      &       $[47,11,24]_4$        &      $[47,12,23]_4$       \\
				12  &      $(63,5.5,45)_4$       &        $[63,5,44]_4$        &       $[63,6,44]_4$       \\
				13  &      $(64,5.5,46)_4$       &        $[64,5,45]_4$        &       $[64,6,44]_4$       \\
				14  &      $(71,17.5,32)_4$      &       $[71,17,32]_4$        &      $[71,18,31]_4$       \\
				15  &      $(73,13.5,38)_4$      &       $[73,13,38]_4$        &      $[73,14,37]_4$       \\
				16  &   $(124,4.5,92)_4^\star$   &       $[124,4,92]_4$        &      $[124,5,90]_4$       \\
				17  &   $(127,4.5,95)_4^\star$   &       $[127,4,95]_4$        &      $[127,5,92]_4$       \\
				18  &   $(128,4.5,96)_4^\star$   &       $[128,4,96]_4$        &      $[128,5,93]_4$       \\
				19  &  $(155,4.5,116)_4^\star$   &       $[155,4,116]_4$       &      $[155,5,114]_4$      \\
				20  &  $(160,4.5,120)_4^\star$   &       $[160,4,120]_4$       &      $[160,5,118]_4$      \\
				21  &  $(163,4.5,122)_4^\star$   &       $[163,4,122]_4$       &      $[163,5,120]_4$      \\
				22  &  $(166,4.5,124)_4^\star$   &       $[166,4,124]_4$       &      $[166,5,122]_4$      \\
				23  &  $(168,4.5,126)_4^\star$   &       $[168,4,126]_4$       &      $[168,5,124]_4$      \\
				24  &  $(171,4.5,128)_4^\star$   &      $[171,4.5,128]_4$      &      $[171,5,127]_4$      \\ \bottomrule
			\end{tabular}%
			\begin{tablenotes}    
				\footnotesize               
				\item[$\star$] According to Lemma \ref{Jopt}, corresponding additive codes are optimal.        
			\end{tablenotes}            
		\end{threeparttable}     
	\end{table}

	\section{Some quaternary ACD codes have larger distance than LCD codes}\label{V}
	This section focuses on constructing good ACD codes with greater distance than best-known quaternary LCD codes in \cite{lu2020optimal,harada2021construction,Ishizuka2022,ishizuka2022construction}.

	In \cite{Guan2023OnEH}, the authors identify sufficient and necessary conditions for the quasi-cyclic codes to be symplectic LCD, as shown in Lemma \ref{symplectic_LCD}.
	
	\begin{lemma}(\cite{Guan2023OnEH})\label{symplectic_LCD}
		Let $\mathscr{C}$ be a $1$-generator quasi-cyclic code in Definition \ref{one_quasi-cyclic_def} of index $\ell$.
		Taking $\Lambda =\sum\limits_{j=0}^{m-1}({{f}_{j}}(x){{\bar{f}}_{m+j}}(x)-{{f}_{m+j}}(x){{\bar{f}}_{j}}(x))$, then $\mathscr{C}$ is symplectic LCD  code if and only if the following equations hold.
		
		\begin{equation}
			\begin{array}{c} 
				g(x)=\tilde{g}(x), \\ 
				gcd(\Lambda,\frac{x^n-1}{g(x)})=1.
			\end{array}
		\end{equation}	
	\end{lemma}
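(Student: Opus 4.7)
The plan is to translate the symplectic LCD condition $\mathscr{C}\cap\mathscr{C}^{\perp_s}=\{\mathbf{0}\}$ into a coprimality statement in the quotient ring $\mathbb{R}_{q,n}$, following the standard polynomial representation of quasi-cyclic codes. First, I view $\mathscr{C}$ as the cyclic $\mathbb{R}_{q,n}$-submodule of $\mathbb{R}_{q,n}^{\ell}$ generated by $\mathbf{g}(x)=(g(x)f_0(x),\ldots,g(x)f_{\ell-1}(x))$, so every codeword has the form $a(x)\mathbf{g}(x)$ for some $a(x)\in\mathbb{R}_{q,n}$, and I introduce the polynomial-valued symplectic pairing
\begin{equation*}
\langle\mathbf{u}(x),\mathbf{v}(x)\rangle_{sp}=\sum_{j=0}^{m-1}\bigl(u_j(x)\bar{v}_{m+j}(x)-u_{m+j}(x)\bar{v}_j(x)\bigr)\in\mathbb{R}_{q,n},
\end{equation*}
whose coefficients encode the ordinary $\mathbb{F}_q$-valued symplectic inner products of $\mathbf{u}$ with all cyclic shifts of $\mathbf{v}$ simultaneously. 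A first key observation is that $a(x)\mathbf{g}(x)\in\mathscr{C}^{\perp_s}$ if and only if $\langle a(x)\mathbf{g}(x),\mathbf{g}(x)\rangle_{sp}=0$ in $\mathbb{R}_{q,n}$, since $\mathbf{g}(x)$ and its cyclic shifts span $\mathscr{C}$ as an $\mathbb{F}_q$-space.

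Next, a direct bilinear computation using $\overline{ab}=\bar{a}\bar{b}$ yields the factorization
\begin{equation*}
\langle a(x)\mathbf{g}(x),\mathbf{g}(x)\rangle_{sp}=a(x)\,g(x)\bar{g}(x)\,\Lambda\pmod{x^n-1}.
\end{equation*}
Writing $h(x)=(x^n-1)/g(x)$, the codeword $a(x)\mathbf{g}(x)$ vanishes iff $h(x)\mid a(x)$ in $\mathbb{F}_q[x]$, while membership in $\mathscr{C}^{\perp_s}$ amounts to $h(x)\mid a(x)\bar{g}(x)\Lambda$. Hence $\mathscr{C}\cap\mathscr{C}^{\perp_s}=\{\mathbf{0}\}$ collapses to the single requirement $\gcd(h(x),\bar{g}(x)\Lambda)=1$ in $\mathbb{F}_q[x]$.

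Finally I would split this coprimality into the two stated conditions. Since $\gcd(n,q)=1$, $x^n-1$ is squarefree, and $\bar{g}(x)$ differs from $\tilde{g}(x)$ only by the unit $x^{-\deg g}$ in $\mathbb{R}_{q,n}$, so $\gcd(h(x),\bar{g}(x))=1$ is equivalent to $\tilde{g}(x)\mid g(x)$; equality of degrees then promotes this to $g(x)=\tilde{g}(x)$, the first stated condition. The remaining factor yields $\gcd\bigl(\Lambda,(x^n-1)/g(x)\bigr)=1$, the second condition, and reversing each step provides the converse. The main technical hurdle I anticipate is step two: pinning down the polynomial symplectic pairing carefully enough that the identity $\langle a(x)\mathbf{g}(x),\mathbf{g}(x)\rangle_{sp}=a(x)g(x)\bar{g}(x)\Lambda$ is rigorous, and justifying that testing against $\mathbf{g}(x)$ alone suffices to detect symplectic orthogonality to all of $\mathscr{C}$; once that correspondence is in hand, the coprimality manipulation in step three is routine.
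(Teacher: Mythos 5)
The paper does not actually prove this lemma; it is imported verbatim from \cite{Guan2023OnEH}, so there is no in-paper proof to compare against. Your proposal follows what is essentially the standard module-theoretic argument for such results, and its skeleton is sound: representing $\mathscr{C}$ as the $\mathbb{R}_{q,n}$-module generated by $\mathbf{g}(x)$, observing that the coefficients of the polynomial pairing $\langle\mathbf{u},\mathbf{g}\rangle_{sp}$ record the $\mathbb{F}_q$-valued symplectic products of $\mathbf{u}$ against all cyclic shifts of $\mathbf{g}$ (which span $\mathscr{C}$ over $\mathbb{F}_q$, so testing against $\mathbf{g}$ alone does suffice), the factorization $\langle a\mathbf{g},\mathbf{g}\rangle_{sp}=a g\bar{g}\Lambda$, and the splitting of $\gcd(h,\bar{g}\Lambda)=1$ into $g=\tilde{g}$ plus $\gcd(\Lambda,h)=1$ using the squarefreeness of $x^n-1$ (you should state $\gcd(n,q)=1$ explicitly, since the lemma itself does not).

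The one genuine gap is your identification ``$a(x)\mathbf{g}(x)$ vanishes iff $h(x)\mid a(x)$.'' The codeword $a\mathbf{g}$ is zero iff $h\mid af_j$ for every $j$, which is equivalent to $h\mid a$ only when $\gcd\bigl(h,\gcd_j f_j(x)\bigr)=1$; Definition \ref{one_quasi-cyclic_def} imposes no such hypothesis. This matters precisely in the necessity direction: if $e=\gcd(h,\bar g\Lambda)\neq 1$ you take $a=h/e$ to produce a nonzero codeword in $\mathscr{C}\cap\mathscr{C}^{\perp_s}$, but that codeword is nonzero only if $e\nmid f_{j_0}$ for some $j_0$. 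In the degenerate case where an irreducible factor of $h$ divides every $f_j$, your witness collapses to zero and you must either rescale the presentation (absorb the common factor into $g$) or argue separately; note that in that case $\gcd(\Lambda,h)\neq 1$ automatically, so the lemma's conclusion still demands that the code fail to be LCD, and that has to be verified rather than asserted. The sufficiency direction is clean as you wrote it, since $\gcd(h,\bar g\Lambda)=1$ forces $h\mid a$ directly. Closing this kernel issue (or adding the standing assumption $\gcd(f_j(x),h(x))=1$, which is what the codes in this paper actually satisfy) is all that separates your sketch from a complete proof.
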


	\begin{corollary}\label{one_quasi-cyclic_sLCD}
		Let $\mathscr{C}$ be a $1$-generator quasi-cyclic code in Definition \ref{one_quasi-cyclic_def} of index $\ell$.
		If generator of $\mathscr{C}$ satisfying Theorem \ref{one_quasi-cyclic} and Lemma \ref{symplectic_LCD}, then there exists a symplectic LCD code with the following parameters:
		\begin{equation}\nonumber
			\left[\ell n,n-deg(g(x)),\ge m\cdot\left \lceil \frac{q+1}{q} d(g(x))\right \rceil\right]_q^s.
		\end{equation}
	\end{corollary}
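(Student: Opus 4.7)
The plan is to assemble the statement from the two previously established results. Since $\mathscr{C}$ is the $1$-generator quasi-cyclic code of index $\ell$ from Definition \ref{one_quasi-cyclic_def}, its length is $\ell n$ by construction. Under the hypotheses of the corollary, Theorem \ref{one_quasi-cyclic} immediately yields the lower bound $m \cdot \lceil \frac{q+1}{q} d(g(x)) \rceil$ on the symplectic distance, and Lemma \ref{symplectic_LCD} guarantees the symplectic LCD property. Only the dimension claim requires a short argument.

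For the dimension, I would study the $\mathbb{F}_q$-linear surjection $\varphi : \mathbb{R}_{q,n} \twoheadrightarrow \mathscr{C}$ defined by
$$a(x) \;\longmapsto\; \bigl([a(x)g(x)f_0(x)], \ldots, [a(x)g(x)f_{\ell-1}(x)]\bigr).$$
Writing $Q(x) = (x^n-1)/g(x)$, an element $a(x)$ lies in $\ker \varphi$ exactly when $Q(x) \mid a(x) f_i(x)$ for every $i \in [0, \ell-1]$. Specializing the hypothesis of Theorem \ref{one_quasi-cyclic} to $\alpha = 0$ (which is permitted since $\alpha$ ranges over all of $\mathbb{F}_q$) gives $\gcd(f_0(x), Q(x)) = 1$, so $Q(x) \mid a(x) f_0(x)$ already forces $Q(x) \mid a(x)$; conversely, any multiple of $Q(x)$ obviously lies in the kernel. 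Thus $\ker \varphi = \langle Q(x) \rangle \subseteq \mathbb{R}_{q,n}$, which has $\mathbb{F}_q$-dimension $\deg g(x)$, and rank-nullity yields $\dim_{\mathbb{F}_q} \mathscr{C} = n - \deg g(x)$.

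There is no real obstacle here: the length, distance bound, and LCD property are direct invocations of Definition \ref{one_quasi-cyclic_def}, Theorem \ref{one_quasi-cyclic}, and Lemma \ref{symplectic_LCD} respectively, and the dimension computation reduces to extracting the coprimality $\gcd(f_0(x), Q(x)) = 1$ from the allowed value $\alpha = 0$. The entire proof is therefore a short synthesis of already-established facts, and can be written in just a few lines.
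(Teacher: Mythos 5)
Your proposal is correct and follows exactly the route the paper intends: the corollary is stated without proof as an immediate combination of Theorem \ref{one_quasi-cyclic} (distance bound) and Lemma \ref{symplectic_LCD} (LCD property). The only part you actually have to argue, the dimension $n-\deg g(x)$, you handle correctly via the kernel of $a(x)\mapsto([a(x)g(x)f_0(x)],\ldots,[a(x)g(x)f_{\ell-1}(x)])$ together with $\gcd(f_0(x),(x^n-1)/g(x))=1$ extracted from the $\alpha=0$ case of the hypothesis, which is precisely the fact the paper uses implicitly.
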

	
	\begin{remark}
		Since binary symplectic LCD codes are equivalent to quaternary ACD codes, so corollary \ref{one_quasi-cyclic_sLCD} also reveals the existence of ACD codes with parameters:
		\begin{equation}\nonumber
			\left(m n,(n-Degree(g(x)))/2,\ge m\cdot\left \lceil \frac{q+1}{q} d(g(x))\right \rceil\right)_{q^2}.
		\end{equation} 
	\end{remark}
	
	\begin{example}
		Let $q=2$, $n=13$. Taking $g(x)=x + 1$, which can generate an optimal binary LCD code with parameters $[13, 12, 2]_2$. Selecting 
		$f_1(x)=x^{12} + x^9 + x^8 + x^7 + x^6 + x^5 + x^4 + x^3$, $f_2(x)=x^{12} + x^9 + x^8 + x^6 + x$, and $f_3(x)=x^{10} + x^9 + x^8 + x^7 + x^6 + x^3 + x^2 + x + 1$; 
		Then, $([g(x)],[g(x)f_1(x)],[g(x)f_2(x)],[g(x)f_3(x)])$ will generate a symplectic $[52,12,\ge 6]_2^s$ LCD code. Using Magma \cite{bosma1997magma}, one can compute the real symplectic distance of this code as $15$. Therefore, we can get an ACD code of parameters $(26,6,15)_4$, which have a larger minimum distance compared with best-known LCD code $[26,6,14]_4$ in \cite{ishizuka2022construction}.
	\end{example}
	
	The following lemma will be helpful in constructing new ACD codes by combining trace Hermitian self-orthogonal codes and ACD codes.
	\begin{lemma}\label{construction_AB}
		If there exist $(n_1,k,d_1)_{q^2}$ ACD code $\mathscr{C}_{a1}$ and $(n_2,k,d_2)_{q^2}$ additive trace Hermitian self-orthogonal code $\mathscr{C}_{a2}$, respectively. Then, there also exists $(n_1+n_2,k,\ge(d_1+d_2))_{q^2}$ ACD code.
	\end{lemma}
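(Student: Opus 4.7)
The plan is to build the new code by juxtaposition. Let $G_1$ and $G_2$ be generator matrices of $\mathscr{C}_{a1}$ and $\mathscr{C}_{a2}$ viewed as $\mathbb{F}_q$-linear codes; each has exactly $2k$ rows because both codes have $\mathbb{F}_{q^2}$-dimension $k$. I would form
\[
G \;=\; \bigl(\,G_1 \mid G_2\,\bigr),
\]
which has $2k$ rows of length $n_1+n_2$ over $\mathbb{F}_{q^2}$. Since $G_1$ (and $G_2$) already has full $\mathbb{F}_q$-rank $2k$, so does $G$, hence the additive code $\mathscr{C}$ generated by $G$ also has $\mathbb{F}_{q^2}$-dimension $k$. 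Thus $\mathscr{C}$ qualifies as an $(n_1+n_2,k,\cdot)_{q^2}$ additive code; the two remaining tasks are to control its minimum distance and to verify the ACD property.

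For the distance, I would note that an arbitrary nonzero codeword of $\mathscr{C}$ has the form $(uG_1,\,uG_2)$ for some nonzero $u\in\mathbb{F}_q^{2k}$. Full-rankness of each $G_i$ forces both $uG_1$ and $uG_2$ to be nonzero, so each coordinate block is a nonzero codeword of the respective code. Consequently
\[
\mathrm{w}_H(uG_1,uG_2) \;=\; \mathrm{w}_H(uG_1)+\mathrm{w}_H(uG_2) \;\ge\; d_1+d_2,
\]
giving the claimed distance bound. This step is essentially bookkeeping and should be entirely routine.

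The core step is checking $\mathscr{C}\cap\mathscr{C}^{\perp_{tH}}=\{\mathbf{0}\}$. I would split inner products along the two blocks: for codewords $(a,b),(c,d)\in\mathscr{C}$,
\[
\langle (a,b),(c,d)\rangle_{tH} \;=\; \langle a,c\rangle_{tH}+\langle b,d\rangle_{tH}.
\]
Because $\mathscr{C}_{a2}\subseteq\mathscr{C}_{a2}^{\perp_{tH}}$ by self-orthogonality, the second term vanishes for all choices of $b,d\in\mathscr{C}_{a2}$. Hence a codeword $(a,b)\in\mathscr{C}$ lies in $\mathscr{C}^{\perp_{tH}}$ iff $\langle a,c\rangle_{tH}=0$ for every $c$ occurring as the first block of a codeword of $\mathscr{C}$; but as $u$ ranges over $\mathbb{F}_q^{2k}$, $c=uG_1$ sweeps out all of $\mathscr{C}_{a1}$, so this is exactly $a\in\mathscr{C}_{a1}^{\perp_{tH}}$. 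Since also $a\in\mathscr{C}_{a1}$, the ACD hypothesis on $\mathscr{C}_{a1}$ yields $a=\mathbf{0}$; then $u=\mathbf{0}$ by full-rankness of $G_1$, and therefore $b=uG_2=\mathbf{0}$ as well.

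The main obstacle, such as it is, is conceptual rather than technical: one must exploit the asymmetry between the two component codes correctly, using self-orthogonality on the second block to kill the cross term and the ACD property on the first block to force triviality of the intersection. Once the juxtaposition is set up with matching $\mathbb{F}_q$-bases (guaranteed by the equal-dimension hypothesis), both the distance computation and the ACD verification drop out in a single line each, so I expect no calculation-heavy steps.
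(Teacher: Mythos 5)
Your proof is correct and follows essentially the same route as the paper: juxtapose the two codes, use self-orthogonality of $\mathscr{C}_{a2}$ to make its block contribute nothing to the inner product, and use the ACD property of $\mathscr{C}_{a1}$ to force the intersection with the dual to be trivial. The only difference is presentational — the paper pulls the construction back through $\Phi^{-1}$ and verifies that the symplectic Gram matrix $G_a\bigl(\begin{smallmatrix}0&I\\-I&0\end{smallmatrix}\bigr)G_a^{T}=(-BA^{T}+AB^{T})+(-DC^{T}+CD^{T})$ has full rank $2k$, whereas you argue directly that the trace-Hermitian hull is zero; these are equivalent criteria.
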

	\begin{proof}
		Denote the generator matrices of $\mathscr{C}_{a1}$ and $\mathscr{C}_{a2}$ as $G_{a1}$ and $G_{a2}$, respectively. Then, let $\Phi^{-1}(G_{a1})=(A\mid B)$,  $\Phi^{-1}(G_{a2})=(C\mid D)$,
		and $G_a=(A,C, B, D)$, then there the following equation holds.
		
		$G_a\begin{pmatrix}
			0&I_{n_1+n_2} \\
			-I_{n_1+n_2}&0
		\end{pmatrix}G_a^{T}=\begin{pmatrix}
			-B,-D ,A,C\\
		\end{pmatrix}\begin{pmatrix}
			A^{T}\\
			B^{T}\\
			C^{T}\\
			D^{T}\\
		\end{pmatrix}$
		
		$=(-BA^T+AB^T)+(-DC^T+CD^T)$.
			
			Since, $\mathscr{C}_{a1}$ is ACD code and $\mathscr{C}_{a2}$ is trace Hermitian self-orthogonal code, there are $Rank(-BA^T+AB^T)=2k$ and $-DC^T+CD^T=\mathbf{0}$. Therefore, $\Phi(Ga)$ can generate an ACD code with parameters $(n_1+n_2,k,\ge(d_1+d_2))_{q^2}$.
		\end{proof}
		\begin{example}
			Let $q=2$, $n_1=12$, $n_2=17$. Taking $g_1(x)=x^4 + 1$, $f_{1,0}(x)=x^{11} + x^8 + x^6 + x^5 + x^4 + 1$
			, $f_{1,1}(x)=x^{11} + x^8 + x^4$ and $g_2(x)=x^9 + x^8 + x^6 + x^3 + x + 1$, $f_{2,0}(x)=x^{16} + x^{14} + x^{13} + x^{12} + x^{10} + x^8 + x^7 + x^5 + x^4 + x^3 + x^2 + x$, $f_{2,1}(x)=x^{16} + x^{12} + x^5 + x^4 + x^3 + x^2 + x$. Then, $([g_1(x)f_{1,0}],[g_1(x)f_{1,1}])$ and $([g_2(x)f_{2,0}],[g_2(x)f_{2,1}])$ will generate $[24,8,7]^s_2$ symplectic LCD code $\mathscr{C}_{s1}$ and symplectic self-orthogonal $[34,8,12]^s_2$ code $\mathscr{C}_{s2}$, respectively. Therefore, $\Phi(\mathscr{C}_{s1})$ is $(12,4,7)_4$ ACD code and $\Phi(\mathscr{C}_{s2})$ is additive $(17,4,12)_4$ trace Hermitian self-orthogonal code. In accordance with Lemma \ref{construction_AB}, $\left(\Phi(\mathscr{C}_{s1})\mid \Phi(\mathscr{C}_{s2})\right)$ is an ACD code with parameters $(29,4,\ge 19)_4$, which outperforms $[29,4,18]_4$ LCD code in \cite{ishizuka2022construction}.

			Further, choosing $g_3(x)=x^2 + 1$, $f_{3,0}(x)=x^7 + x^6 + x^5 + x^2$, and $f_{3,1}=x^9 + x^6 + x^5 + x^3 + 1$; then, $([g_3(x)f_{3,0}],[g_3(x)f_{3,1}])$ can also generate an symplectic LCD $[20,8,6]^s_2$ code $\mathscr{C}_{s3}$. By Lemma \ref{construction_AB}, $(\Phi(\mathscr{C}_{s2})\mid \Phi(\mathscr{C}_{s3}))$ is an $(27,4,\ge 18)_4$ ACD code, which is also performing better than LCD code $[27,4,17]_4$ in the literature \cite{ishizuka2022construction}.
		\end{example}

		\begin{lemma}(Construction X of ACD code)\label{ACDconstruction_X}
			If there are two ACD codes $\mathscr{C}_{a2} \subset \mathscr{C}_{a1}$, with parameters $(n,k_2,d_2)_{q^2} \subset (n,k_1,d_1)_{q^2}$, where $d_2 > d_1$. Let $\mathscr{C}_{a3}$ be an additive trace Hermitian self-orthogonal code have parameters $(l,k_1-k_2,\delta )_{q^2}$, then there exists $(n+l,k_1,\min\{\delta+d_1,d_2\} )_{q^2}$ ACD code.
		\end{lemma}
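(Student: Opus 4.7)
The plan is to mimic the additive Construction X from Lemma \ref{construction_X} for the generator matrix and the minimum distance bound, and then verify that the ACD property is preserved using the trace Hermitian self-orthogonality of $\mathscr{C}_{a3}$. Concretely, I would write the generator of $\mathscr{C}_{a1}$ in block form $G_{a1}=\begin{pmatrix}G_{a2}\\ G_{ax}\end{pmatrix}$, where $G_{ax}$ collects the $2(k_1-k_2)$ extra $\mathbb{F}_q$-basis rows outside $\mathscr{C}_{a2}$, and then define
\[
G=\begin{pmatrix}G_{ax} & G_{a3}\\ G_{a2} & \mathbf{0}\end{pmatrix}.
\]
The dimension, length, and distance bound $\min\{\delta+d_1,d_2\}$ then follow verbatim from the proof of Lemma \ref{construction_X}; no new work is needed on that side.

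For the ACD property, I would pass to the equivalent symplectic picture via $\Phi^{-1}$ and use the criterion extracted from the proof of Lemma \ref{construction_AB}: a code whose $\mathbb{F}_q$-generator has the form $(A\mid B)$ is trace Hermitian ACD iff the antisymmetric pairing $AB^{T}-BA^{T}$ has full rank $2k$. Setting $\Phi^{-1}(G_{ax})=(A_x\mid B_x)$, $\Phi^{-1}(G_{a2})=(A_2\mid B_2)$, and $\Phi^{-1}(G_{a3})=(A_3\mid B_3)$, the matrix $\Phi^{-1}(G)$ has block form
\[
\Phi^{-1}(G)=\begin{pmatrix}A_x & A_3 & B_x & B_3\\ A_2 & \mathbf{0} & B_2 & \mathbf{0}\end{pmatrix},
\]
so expanding $AB^{T}-BA^{T}$ in blocks yields
\[
\begin{pmatrix}(A_xB_x^{T}-B_xA_x^{T})+(A_3B_3^{T}-B_3A_3^{T}) & A_xB_2^{T}-B_xA_2^{T}\\ A_2B_x^{T}-B_2A_x^{T} & A_2B_2^{T}-B_2A_2^{T}\end{pmatrix}.
\]
The zero block in the $\mathscr{C}_{a2}$-rows kills all cross-terms between $\mathscr{C}_{a3}$ and $\mathscr{C}_{a2}$, while the trace Hermitian self-orthogonality of $\mathscr{C}_{a3}$ gives $A_3B_3^{T}-B_3A_3^{T}=\mathbf{0}$. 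What remains coincides, up to an obvious block permutation, with the form matrix of $\mathscr{C}_{a1}$ computed from $\begin{pmatrix}G_{ax}\\ G_{a2}\end{pmatrix}$, which has rank $2k_1$ because $\mathscr{C}_{a1}$ is ACD. Hence the new form matrix is invertible, so the new code is ACD.

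The main obstacle is the block bookkeeping: one has to keep straight how $\Phi^{-1}$ interacts with the juxtaposition (the $n$- and $l$-coordinate halves each split into their own $(A,B)$ pair), and then check that exactly the two effects needed, namely the zero block annihilating the $\mathscr{C}_{a2}$-$\mathscr{C}_{a3}$ cross-term and the self-orthogonality of $\mathscr{C}_{a3}$ annihilating its own diagonal block, collapse the augmented form matrix back to that of $\mathscr{C}_{a1}$. Once this is in place, the invertibility and hence the ACD property transfer for free, and the distance bound from Lemma \ref{construction_X} completes the argument.
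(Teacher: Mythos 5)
Your proposal is correct and follows essentially the same route as the paper: the identical block decomposition $G_{a1}=\bigl(\begin{smallmatrix}G_{a2}\\ G_{ax}\end{smallmatrix}\bigr)$, the identical augmented generator matrix, the distance bound inherited from Lemma \ref{construction_X}, and the ACD property checked via the rank of the symplectic Gram matrix as in Lemma \ref{construction_AB}. If anything, your explicit block computation (showing the $\mathscr{C}_{a3}$ diagonal block vanishes by self-orthogonality and the cross-terms vanish by the zero block, so the Gram matrix reduces to that of $\mathscr{C}_{a1}$) is more careful than the paper's bare citation of Lemma \ref{construction_AB}, whose hypotheses do not literally match the dimensions here.
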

		\begin{proof}
			First, denote the generator matrices of $\mathscr{C}_{a1}$, $\mathscr{C}_{a2}$ and $\mathscr{C}_{a3}$ as $G_{a1}$, $G_{a2}$ and $G_{a3}$, respectively. 
			Since, $\mathscr{C}_{a2} \subset \mathscr{C}_{a1}$, there is $G_{a1}=\begin{pmatrix}
				G_{a2} \\
				G_{ax}
			\end{pmatrix}$.
			Then, construct $Gx=\begin{pmatrix}
				G_{ax}	& G_{a3} \\
				G_{a2}	&0
			\end{pmatrix}$. By Lemma \ref{construction_AB}, $Gx$ can generate an additive code with parameters $(n+l,k_1,\min\{\delta+d_1,d_2\} )_{q^2}$.
		\end{proof}
		\begin{example}
			Let $q=2$, $n=12$. Taking $g(x)=x^4 + 1$, and select $f_1(x)=x^7 + x^6 + x^5 + x^4 + x^3 + x^2 + x + 1$, $f_2(x)=x^9 + x^8 + x^5$, $f_3(x)=x^{10} + x^9 + x^6 + x^5 + x^4 + x^3 + x^2$. Then, $([g(x)],[g(x)f_1(x)],[g(x)f_2(x)],[g(x)f_3(x)])$ will generate a symplectic LCD $[42,8,16]_2^s$ code, which corresponds to an ACD code of parameters $(24,4,16)_4$. Using Magma \cite{bosma1997magma}, one can easily get a subcode with parameter $(24,1,22)_4$ by taking out two codewords of weight $22$ from $(24,4,16)_4$. The generator matrix of additive code $(24,1,22)_4$ is as follows. 
			{\footnotesize 	\setlength{\arraycolsep}{0.01pt}
				$$G_{sub}=\left( {\begin{array}{*{24}{c}}
						1&w&w^2&w&w&0&1&w&w^2&w&w&0&w^2&w^2&1&1&w&w&w^2&w^2&1&1&w&w\\
						w&0&1&w&w^2&w&w&0&1&w&w^2&w&w&w&w^2&w^2&1&1&w&w&w^2&w^2&1&1\\
				\end{array}} \right).$$}
			
			According to Lemma \ref{ACDconstruction_X}, selecting trace Hermitian self-dual $(6,3,4)_4$ \footnote{This code can be derived from quantum $[[6,0,4]]_2$ code in \cite{Grassltable}.} as auxiliary code leads to an ACD code with parameters $(30,4,20)_4$, which has a larger minimum distance than $(30,4,19)_4$ in \cite{ishizuka2022construction}.
		\end{example}

		\begin{lemma}
			If $\mathscr{C}_a$ is an ACD code of parameters $(n,k,d)_{q^2}$, then the following ACD codes also exist:\par
			(1) For $i\ge 1$, $(n+i,k,\ge d)_{q^2}$ (ACD Extend);\par
			(2) For $i \le k$, $(n-i,k-i,\ge d)_{q^2}$ (ACD Shorten);\par
			(3) For $i\le d$, $(n-i,k,\ge d-i)_{q^2}$ (ACD Puncture).
		\end{lemma}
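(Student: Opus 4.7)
The plan is to leverage the equivalence, established in the preliminaries, between $\mathscr{C}_a \subset \mathbb{F}_4^n$ being ACD (under the trace Hermitian form) and the associated binary symplectic code $\mathscr{C}_s = \Phi^{-1}(\mathscr{C}_a)$ being symplectic LCD; equivalently, the Gram matrix $G_s J G_s^T$ is nonsingular, where $G_s = [A \mid B]$ is a generator matrix of $\mathscr{C}_s$ and $J$ is the standard symplectic form. Each of the three operations on $\mathscr{C}_a$ pulls back to a corresponding operation on $\mathscr{C}_s$.

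For part (1), I would append $i$ zero coordinates to each additive codeword. Under $\Phi^{-1}$, this adds $i$ zero columns to each of $A$ and $B$, producing $G_s' = [A, \mathbf{0} \mid B, \mathbf{0}]$. The new Gram matrix satisfies $G_s' J' G_s'^T = A B^T - B A^T = G_s J G_s^T$, which remains invertible; hence the extended symplectic code is still LCD and the extended additive code is ACD. Zero coordinates contribute no Hamming weight, so the minimum distance is preserved at $\geq d$.

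For part (2), I would exploit the symmetry of the ACD property: because $\mathscr{C}_a \cap \mathscr{C}_a^{\perp} = \{\mathbf{0}\}$ is symmetric in $\mathscr{C}_a$ and $\mathscr{C}_a^{\perp}$, the dual $\mathscr{C}_a^{\perp}$ is itself ACD. By the same argument as in the analogous derivation lemma for general additive codes earlier in Section \ref{IV} (the additive analogue of Theorem 1.5.7 (i) in \cite{huffman2010fundamentals}), the shortening of $\mathscr{C}_a$ coincides with the dual of the puncturing of $\mathscr{C}_a^{\perp}$. Applying part (3) to $\mathscr{C}_a^{\perp}$ produces a punctured ACD code whose dual is then an ACD code with parameters $(n - i, k - i, \geq d)_{q^2}$.

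For part (3), I would puncture $i$ coordinates of $\mathscr{C}_a$; the Hamming distance drops by at most $i$, giving $\geq d - i$, and the dimension is unchanged provided $i \leq d$ ensures no nonzero codeword becomes identically zero. To retain the ACD property, one may either choose the $i$ deleted coordinates so that the updated Gram matrix of the generators remains nonsingular --- a rank-update argument shows that such a choice generically exists --- or, if no such choice is available, apply Lemma \ref{ACDconstruction_X} with a small auxiliary additive trace Hermitian self-orthogonal code to repair the ACD property without exceeding the target length. The main obstacle is precisely this step: arbitrary puncturing of an ACD code need not yield an ACD code, since removing a coordinate corresponds to a low-rank update of the Gram matrix that may destroy its invertibility. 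Once (3) is proved, (1) is immediate from the Gram-matrix calculation above and (2) follows from (3) by the symmetry of the ACD property combined with the shortening--puncturing duality.
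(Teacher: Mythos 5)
Your part (1) is correct and matches the paper's argument (appending zero columns leaves the Gram matrix $AB^{T}-BA^{T}$ untouched). The structural problem is that you have inverted the paper's chain of dependencies, and the inversion places the entire weight of the proof on the one step you cannot complete. The paper proves the \emph{shortening} statement (2) directly: delete one coordinate and one generator row from $G_a$, observe that the resulting additive code has a hull of dimension $0.5$, and quotient that hull out to recover an ACD $(n-1,k-1,\ge d)_{q^2}$ code; iterating gives (2). Puncturing (3) then comes for free, because $\mathscr{C}_a^{\perp}$ is itself ACD, the punctured code of $\mathscr{C}_a$ is the dual of the shortened code of $\mathscr{C}_a^{\perp}$ (the additive analogue of Theorem 1.5.7 in \cite{huffman2010fundamentals}), and duals of ACD codes are ACD. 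You instead try to prove (3) directly and deduce (2) from it.

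The gap is exactly where you flag it: your direct proof of (3) rests on the claim that one can \emph{choose} the punctured coordinates so that the rank-$\le 2$ update of the Gram matrix keeps it nonsingular, justified only by the word ``generically.'' That is not an argument --- a priori there could be an ACD code for which every single-coordinate puncture creates a nontrivial hull --- and your fallback via Lemma \ref{ACDconstruction_X} does not rescue the statement, since appending an auxiliary self-orthogonal code increases the length and so cannot produce a code of length $n-i$. Because your (2) is derived from (3), the gap propagates to (2) as well. The repair is to run the argument in the paper's direction: establish shortening first by the delete-and-quotient-the-hull construction (which never requires a choice of good coordinates), then transfer to puncturing through the dual. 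Note also that shortening directly gives distance $\ge d$ without any appeal to (3), since the shortened code consists of codewords of $\mathscr{C}_a$ vanishing on the deleted positions.
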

		\begin{proof}
			For (1), it is sufficient to directly juxtapose the all-zero column with $\mathscr{C}_a$, or an additive self-orthogonal code. 
			For (2), let $G_a$ denote generator matrix of $\mathscr{C}_a$. Deleting the first row and column of $G_a$ yields $G_a^{\prime}$. Since the dimensions of ACD codes must be of integer dimensions, the generator matrix of all ACD codes is of even rows. Therefore, $G_a^{\prime}$ will generate an additive code $\mathscr{C}_a^{\prime}$ with $0.5$-dimension hull $\mathcal{H}$. Then, removing $\mathcal{H}$ from $\mathscr{C}_a^{\prime}$ and noting the resulting ACD code as $\mathscr{C}_a^{\prime\prime}$. 
			As the two-step operation from $\mathscr{C}_a$ to $\mathscr{C}_a^{\prime\prime}$ is equivalent to shorten, the parameter of $\mathscr{C}_a^{\prime\prime}$ is $(n-1,k-1,\ge d)_{q^2}$. Repeating this process will yield ACD codes with parameters $(n-i,k-i,\ge d)_{q^2}$.
			Finally, since shorten for $\mathscr{C}_a$ is equivalent to puncture for $\mathscr{C}_a^{\perp}$, (3) also holds.
		\end{proof}

		With a computer-aided search method, we also construct some good ACD codes of lengths ranging from $22$ to $30$. We give their generators in Table \ref{good_QCACD} in Appendix. Further, we compare them with the best quaternary LCD codes in \cite{lu2020optimal,harada2021construction, Ishizuka2022,ishizuka2022construction} in Table \ref{good_ACD}. The results show that our ACD codes have greater distances for the same length and dimension.
		Specifically, in Table \ref{good_ACD}, ``S" represents the shorten code.
		
		
		\begin{table}[htbp]
			\caption{Quaternary ACD codes outperform linear counterparts}\label{good_ACD}
			\centering
			\begin{tabular}{ccc}
				\toprule 
				No. &   Our ACD Codes    & Best LCD Codes in \cite{lu2020optimal,harada2021construction,Ishizuka2022,ishizuka2022construction} \\ \midrule
				1  &   $(22,10,9)_4$    &                                            $[22,10,8]_4$                                             \\
				2  &   $(25,13,8)_4$    &                                            $[25,13,7]_4$                                            \\
				3  &   $(26,6,15)_4$    &                                            $[26,6,14]_4$                                            \\
				4  &   $(26,14,8)_4$    &                                            $[26,14,7]_4$                                            \\
				5  &   $(27,10,12)_4$   &                                           $[27,10,11]_4$                                            \\
				6  &  $(26,9,12)_4$(S)  &                                            $[26,9,10]_4$                                            \\
				7  &   $(27,12,10)_4$   &                                            $[27,12,9]_4$                                            \\
				8  & $(26,11,10)_4$(S)  &                                            $[26,11,9]_4$                                            \\
				9  &   $(27,15,8)_4$    &                                            $[27,15,7]_4$                                            \\
				10  &   $(28,12,11)_4$   &                                            $[28,12,9]_4$                                             \\
				11  &   $(29,14,10)_4$   &                                            $[29,14,9]_4$                                            \\
				12  & $(28,13,10)_4$(S)  &                                            $[28,13,9]_4$                                            \\
				13  &   $(27,4,18)_4$    &                                            $[27,4,17]_4$                                             \\
				14  &   $(29,4,19)_4$    &                                            $[29,4,18]_4$                                            \\
				15  &   $(30,14,10)_4$   &                                            $[30,14,9]_4$                                            \\
				16  & $(29,13,10)_4$(S)  &                                            $[29,13,9]_4$                                            \\
				17  & $(30,4,20)_4$  &                                            $[30,4,19]_4$                                            \\ \bottomrule
			\end{tabular}%
		\end{table}
		
		\section{Conclusion}\label{VI}
		In this work, we propose a lower bound on the symplectic distance of $1$-generator quasi-cyclic codes with index even and give several combinations and derivations of additive codes. To verify the applicability of our methods, we construct many good additive codes and ACD codes that are better than the best-known linear codes in \cite{Grassltable} and the best-known LCD codes in \cite{lu2020optimal,harada2021construction, Ishizuka2022,ishizuka2022construction}, respectively.
		Our results show that in the optimal case, additive codes can have higher code rates than linear codes; in the non-optimal case, additive codes can have larger distances than linear codes of the same lengths and dimensions. However, it remains an open problem whether the distance of optimal additive codes can be greater than that of optimal linear counterparts.
		In addition, it is notable that most of the additive codes in this paper can also be considered additive cyclic codes. Therefore, it will be an interesting problem to study the construction of additive cyclic codes in the future.
		
		\section*{Appendix}
		
		In order to save space, we give generators of quasi-cyclic codes in abbreviated form in Table \ref{good_QC} and \ref{good_QCACD}, presenting the coefficient polynomials in ascending order, with the indexes of the elements representing successive elements of the same number.
		For example, polynomial $1 + x^2 + x^3 + x^4$ over $\mathbb{F}_2$ is denoted as $101^3$. 
		Some of the additive codes in Tables \ref{good_QC} and \ref{good_QCACD} are derived codes and are marked with abbreviations to save space, as follows.
		\begin{itemize}
			\item D: Dual Code;	
			\item Au: Augment Code (Add $\mathbf{1_n}$);	
			\item DoubleAu: Augment Twice Code (Add $\mathbf{1_n}$ and $\mathbf{w_n}$);
			\item X: Additive Construction X.
		\end{itemize}
		
		\begin{table*}[h!]
			\centering
			\caption{Quaternary additive codes from symplectic 1-generator quasi-cyclic codes}\label{good_QC}
			
			\resizebox{\textwidth}{!}{
				\begin{threeparttable}

					\begin{tabular}{cccc}
						\toprule
						No. &  $\mathscr{C}_a$  &                                                                                                                                                                                                                                                                                                                                                                                                                                               $g(x),f_0(x),f_1(x)$                                                                                                                                                                                                                                                                                                                                                                                                                                               &                                  Derived Codes                                   \\ \midrule
						1  &   $(21,10,8)_4$   &                                                                                                                                                                                                                                                                                                                                                                                                           \makecell[c]{$1^{2}$,$01010^{2}1010^{2}1^{2}0^{2}1^{2}0^{3}1$,$1^{3}0^{5}1^{6}0^{2}1^{3}01$}                                                                                                                                                                                                                                                                                                                                                                                                           &                               $(21,10.5,8)_4$ (Au)                               \\ \midrule
						2  &  $(21,8.5,9)_4$   &                                                                                                                                                                                                                                                                                                                                                                                                         \makecell[c]{$1^{3}01$,$01^{2}0^{4}1^{5}0^{2}10^{2}1^{2}01$, $01^{2}01^{4}0^{2}1^{4}0^{3}1^{4}$}                                                                                                                                                                                                                                                                                                                                                                                                         &                               $(22,12.5,7)_4$(ExD)                               \\ \midrule
						3  &  $(21,5.5,12)_4$  &                                                                                                                                                                                                                                                                                                                                                                                                               \makecell[c]{$10^{2}1^{2}010101$,$1^{5}010^{2}1^{4}01^{4}01^{2}$,$0^{2}101^{7}0101$}                                                                                                                                                                                                                                                                                                                                                                                                               &                               $(22,15.5,5)_4$(ExD)                               \\ \midrule
						4  &  $(30,6.5,18)_4$  &                                                                                                                                                                                                                                                                                                                                                                                          \makecell[c]{$10101^{4}0101^{2}0^{4}1$,$10^{2}101^{2}0^{2}1^{7}0^{3}1^{5}01^{3}$,$0^{2}101^{2}0^{2}10^{3}10^{5}1^{4}01^{5}01$}                                                                                                                                                                                                                                                                                                                                                                                          &                                        -                                         \\ \midrule
						5  &  $(35,9.5,18)_4$  &                                                                                                                                                                                                                                                                                                                                                                                       \makecell[c]{$101010^{2}1^{6}0^{3}1$,$101^{2}0^{3}1^{3}01^{3}01^{3}01^{9}01^{2}0^{2}1^{2}$,\\$1010^{13}101^{2}010^{2}1^{4}01^{2}0101$}                                                                                                                                                                                                                                                                                                                                                                                       &                                        -                                         \\ \midrule
						6  & $(42,17.5,14)_4$  &                                                                                                                                                                                                                                                                                                                                                                   \makecell[c]{$10^{2}101^{3}$, $101^{2}01^{2}01^{2}01^{2}0^{7}10^{2}10^{2}10^{2}101^{3}01010^{2}1$,\\ $010101^{2}01^{3}0101^{5}0^{2}1^{2}010^{2}1^{3}0^{2}10^{4}10^{2}1^{2}$}                                                                                                                                                                                                                                                                                                                                                                   &                               $(42,24.5,10)_4$(D)                                \\ \midrule
						7  & $(45,14.5,19)_4$  &                                                                                                                                                                                                                                                                                                                                                         \makecell[c]{$1^{3}0^{2}1^{3}0^{4}1^{5}$, $0^{2}10^{3}1^{4}010^{2}1^{2}0101^{2}0^{2}10^{2}1^{2}01^{2}01^{2}0^{2}101^{2}0^{3}1^{2}$,\\ $0101^{2}01^{5}0^{3}101010101^{2}0^{7}1^{2}0^{4}101^{6}$}                                                                                                                                                                                                                                                                                                                                                          &                                        -                                         \\ \midrule
						8  & $(45,15.5,18)_4$  &                                                                                                                                                                                                                                                                                                                                                                        \makecell[c]{$1^{6}0^{6}1^{3}$, $0^{2}1^{7}0^{3}1010^{2}10^{6}1^{7}0^{2}1^{4}0^{3}1^{4}$,\\$0^{5}1010^{4}1^{2}0^{2}101010^{2}1^{4}01^{2}010^{2}1^{2}0^{2}1010^{2}1$}                                                                                                                                                                                                                                                                                                                                                                        &                                        -                                         \\ \midrule
						9  & $(45,16.5,17)_4$  &                                                                                                                                                                                                                                                                                                                                                         \makecell[c]{$10^{8}10^{2}1$, $101^{2}01^{2}0101010^{2}1^{3}01^{2}01^{2}0^{2}1^{4}0^{2}10^{3}10^{3}101^{3}$,\\$1^{3}0^{2}101^{2}010^{2}1010^{2}1^{2}0^{2}101^{2}01010101^{3}0^{2}1^{2}0101^{2}$}                                                                                                                                                                                                                                                                                                                                                         &                                        -                                         \\ \midrule
						10  & $(47,11.5,24)_4$  &                                                                                                                                                                                                                                                                                                                                        \makecell[c]{$10^{3}1^{2}0^{2}1^{2}01^{2}0^{2}10^{2}1010^{2}1^{2}$,\\$01^{2}01^{3}0^{3}1010^{4}101^{2}0^{3}10^{4}1^{2}0101^{2}01^{3}01^{2}0101$,\\ $1^{3}0^{4}1^{2}0^{4}1010^{3}1^{3}0^{2}1^{2}0^{5}1^{2}0^{2}101^{2}010^{2}1^{2}$}                                                                                                                                                                                                                                                                                                                                         &                                        -                                         \\ \midrule
						11  & $(71,17.5,32)_4$  &                                                                                                                                                                                                                                                                                                               \makecell[c]{$1010101^{2}010^{3}1^{2}0^{2}1^{2}0^{5}10^{4}10^{3}1^{4}$,\\$0^{2}1^{6}0^{4}1^{12}010^{2}101010101^{4}010^{2}1^{4}01^{2}0^{3}101^{3}010^{4}101^{2}01$,\\ $101^{3}01^{3}0^{3}1^{2}0^{2}1^{3}01^{4}0^{5}1^{2}010^{6}1^{2}01^{2}01^{4}010^{2}10^{2}1^{3}0^{2}1^{7}0^{2}1$}                                                                                                                                                                                                                                                                                                               &                                        -                                         \\ \midrule
						12  & $(73,13.5,38)_4$  &                                                                                                                                                                                                                                                                                \makecell[c]{$1^{2}0^{2}1^{6}0^{4}1^{3}0^{2}101^{3}0^{2}10^{3}1010101^{2}0^{3}10^{3}1^{2}$,\\ $01^{2}0^{2}1^{2}010^{2}1010^{4}101^{2}0^{3}1^{3}0^{2}1^{2}01^{3}0^{2}10^{3}10^{2}10^{3}1010^{3}1^{3}0^{3}1^{3}01010^{2}1$,\\$1^{4}0101^{4}0^{3}1^{3}0^{4}1010^{2}1^{3}0^{2}101^{3}010^{2}101^{3}01^{2}01^{3}0^{2}101^{3}0^{3}1^{2}01^{3}0^{2}101$}                                                                                                                                                                                                                                                                                 &                                        -                                         \\ \midrule
						13  &   $(91,6.5,63)_4$   &                                                                                                                                                                                                                      \makecell[c]{$1^{2}010^{3}1^{2}01^{2}0^{2}10^{4}1010101^{2}0101^{3}0^{2}10^{2}101^{2}01^{2}0^{2}1^{3}0^{5}10^{2}1^{2}0^{3}1^{2}0^{2}1^{4}01010101$,\\ $1^{2}0^{2}1^{2}01^{3}0^{3}1^{2}01^{2}0^{3}1^{3}010^{3}101^{2}0^{2}1^{4}010^{2}10^{2}10101^{6}0^{2}1^{2}0^{2}101^{2}01^{5}01010^{3}1^{2}0^{2}1^{3}0^{3}1^{2}$,\\ $01^{2}0^{3}1^{2}0^{3}10^{4}101^{2}0^{3}1^{3}0^{2}1^{3}01^{3}0^{2}10^{2}1^{2}010^{3}10^{3}1010^{2}1010^{4}1^{3}01^{3}01^{3}01^{3}0^{9}1^{2}01$}                                                                                                                                                                                                                      &              \makecell[c]{$(91,7,63)_4$(Au) \\$(92,7,64)_4$(ExAu)}               \\ \midrule
						14  & $(124,4.5,92)_4$  &                                                                                                                                               \makecell[c]{$101^{5}01^{4}0^{2}101^{2}01^{6}0^{2}1^{2}0^{3}1^{3}0^{2}10^{3}1^{5}0^{4}10^{2}1^{2}0101010^{4}1010^{3}101^{2}0^{4}$\\$1^{3}01010^{2}1^{2}0^{2}10^{2}10^{2}1^{3}01^{2}0101^{2}010^{3}1^{2}$,$01^{2}0101^{3}01^{3}01^{8}01^{4}0^{2}1^{3}0101^{2}0^{5}$\\$1^{2}0^{2}1^{2}01^{2}0^{3}10101010^{4}10^{2}10^{3}10^{2}1^{5}0101^{3}010^{3}1^{2}010^{4}10^{3}1^{6}0^{3}1^{3}010^{2}1^{5}$,\\$1^{3}010101^{2}01^{2}0101^{2}01^{4}0^{6}101^{2}01^{2}010^{7}1^{2}0101^{5}0^{2}10^{2}1^{3}01^{2}$\\$0^{2}10^{2}1^{2}0^{2}1^{4}0^{2}1^{2}01^{2}0^{2}1^{4}0101^{3}0^{3}10^{3}1010101^{2}0^{2}1^{2}0^{2}1^{2}01^{2}$}                                                                                                                                               &                               -                               \\ \midrule
						15  & $(155,4.5,116)_4$\tnote{4} &                                                                          \makecell[c]{$1^{3}010^{3}1^{2}010101^{2}010^{2}1^{2}0^{2}1^{2}0^{3}101^{2}0^{2}10^{2}101^{2}01^{2}0101^{5}01010^{2}10^{2}1^{5}0^{3}$\\$1^{3}0^{4}10^{3}1^{4}0^{2}1010^{3}10^{2}
							1^{2}01^{7}0^{2}1^{3}01^{2}0^{4}1^{2}0^{5}101^{3}01^{3}0^{2}10^{4}10101$,\\ $01^{2}010^{6}10^{3}1010^{2}1^{3}0^{2}101^{2}0^{4}1^{2}01^{7}0^{2}101^{3}0^{4}10^{10}10^{3}1^{4}0^{5}1^{2}0^{2}1^{2}0^{3}1^{2}$\\$0^{3}1^{2}0^{2}1^{3}0^{2}1^{3}0^{3}10^{2}10^{5}101^{6}0101^{7}010^{2}1^{2}0^{3}1^{3}010^{4}1^{3}$,\\ $01^{2}0^{2}10^{5}10^{3}1010101^{4}01010101^{3}01^{7}
							0^{2}1^{2}0101^{2}0^{4}1^{3}0^{2}10^{2}1010^{3}1^{2}01^{2}01^{3}0$\\$10^{2}10^{3}1^{3}0^{2}1^{2}010^{2}1^{2}010^{2}10^{3}1010^{2}10101^{2}010^{2}10^{2}1^{2}0^{2}1010^{2}10^{3}1^{3}010^{3}1$}                                                                          &                               $(166,4.5,124)_4$(X)                              \\ \midrule
						16  & $(195,6.5,139)_4$ & \makecell[c]{$10^{2}101^{2}010^{3}101^{2}0^{2}1^{2}01^{2}0^{3}10101^{5}0^{3}1^{3}010^{5}1^{3}0101^{3}0^{2}10^{7}10^{7}1^{3}01^{3}0^{5}10^{2}101$\\$01^{7}0^{6}1^{2}0^{2}10^{2}1^{5}0^{3}1^{2}01^{5}010^{2}10^{2}10
							10^{2}1^{4}010^{2}101^{3}0^{5}1^{5}01^{2}0^{3}1^{2}01^{2}0101^{2}01$,\\ $0^{4}1^{5}0^{2}1^{2}0^{7}1^{3}0^{2}1^{3}0^{2}1010^{2}101^{2}0^{4}1^{2}0^{2}1010^{4}1^{2}0^{2}1^{3}01^{2}0^{2}10^{3}10^{3}10^{3}1^{2}0^{4}1^{2}0$\\$101^{3}010^{2}10^{3}10^{2}101^{2}0^{2}1^{2}01^{11}010^{8}1^{2}0^{4}10^{2}101^{2}0^{2}1^{2}0^{5}101^{8}010^{4}1^{4}01^{2}01^{4}0^{2}1^{2}0^{2}1010^{2}1$,\\ $0^{2}1^{3}0^{2}10^{3}101^{3}01010^{2}1^{3}0^{4}1^{3}01010^{2}10^{2}10^{5}10^{3}1^{2}0^{2}10^{6}1^{2}0^{4}101^{3}010^{2}1^{3}0^{3}1^{3}0^{4}1^{4}$\\$0^{3}1^{2}010^{2}10^{4}1^{4}01^{4}0^{5}101^{4}01^{3}01
							0^{2}1^{2}01^{2}010101010^{4}101^{4}01^{2}0101^{3}0^{2}1^{4}0^{7}1^{2}0101^{6}0^{2}1$} & \makecell[c]{$(196,6.5,140)_4$(Ex) \\$(195,7,139)_4$(Au)\\$(196,7,140)_4$(ExAu)} \\ \bottomrule
					\end{tabular}
					\begin{tablenotes}    
						\footnotesize               
						\item[4]  By taking out the code with the weight 124 from $(155,4.5,116)_4$, an additive code with parameter $(155,2,124)_4$ can be generated. Therefore, with $(155,2,124)_4 \subset (155,4.5,116)_4$, we can obtain $(166,4.5,124)_4$ by choosing $(11,2.5,8)_4$ as auxiliary code.        
					\end{tablenotes}            
			\end{threeparttable}  }     
		\end{table*}

		\begin{table*}[h!]
			\caption{Quaternary ACD codes from symplectic 1-generator quasi-cyclic codes}\label{good_QCACD}
			\centering
			\begin{tabular}{cccc}
				\toprule
				No. & $\mathscr{C}_a$ &                                           $g(x),f_0(x),f_1(x)$                                            &  Derived Codes   \\ \midrule
				1     & $(22,10,9)_4$ & $101$, $1010^{7}1^{2}0^{2}10^{4}1$, $10^{4}10^{3}1^{2}0^{3}101^{4}01$ & - \\
				2     & $(25,12,9)_4$ & $1^2$, $01^{2}01010^{4}1010^{3}1^{2}0^{3}1^3$, $1^{4}010^{3}1^{4}010^{2}10^{2}10^{1}101$ & $(25,13,8)_4$(D) \\
				3     & $(26,12,8)_4$ & $101$, $10^{3}1^{3}0^{3}1^{3}010^{2}10101^{2}01^2$, $10^{2}1^{3}010101^{2}0^{4}1^{2}0^{5}1^2$ & $(26,14,8)_4$(D) \\
				4     & $(27,10,12)_4$ & $1^{2}01^{2}01^2$, $0^{4}10^{3}1^{3}01^{3}01^{2}0^{7}1^2$, $0^{2}10^{2}1010^{2}1^{2}01^{2}0^{6}1010^{2}1$ & - \\
				5     & $(27,12,10)_4$ & $10^{2}1$, $0^{4}101^{2}0^{4}1^{5}010101^{3}01$, $0^{6}1^{2}010101^{2}010^{5}10^{2}1$ & - \\
				6     & $(27,15,8)_4$ & $10^{2}1$, $1$, $0101^{5}0^{3}1^{2}01$ & - \\
				7     & $(28,12,11)_4$ & $10^{3}1$, $1^{6}01^{2}0^{3}1^{2}0^{2}10^{2}1^{2}01^{3}0^{2}1$, $0^{2}1^{2}01^{5}0101010^{8}101$ & - \\
				8     & $(29,14,10)_4$ & $1^2$, $1010^{5}10^{2}1^{3}0^{2}10^{3}10^{2}101$, $10^{5}1^{3}01^{3}01^{2}01^{6}010^{3}1$ & - \\
				9     & $(30,14,10)_4$ & $101$, $1^{2}0^{4}10^{3}101^{2}0^{3}101^{5}0^{3}101$, $1^{3}0^{2}101^{2}010^{2}1^{6}0101^{2}0^{5}1$ & - \\
				\bottomrule
			\end{tabular}%
		\end{table*}

		%

		\bibliographystyle{IEEEtran} 
		\bibliography{reference} 

\end{document}